\newtheorem{theorem}{Proposition}
\newtheorem{lemma}{Lemma}
\newtheorem{corollary}{Corollary}
\begin{document}

\title{Lee--Yang zeros in the Rydberg atoms}

\author{Chengshu Li}
\email{lichengshu272@gmail.com}
\affiliation{Institute for Advanced Study, Tsinghua University, Beijing 100084, China}

\author{Fan Yang}
\affiliation{Institute for Advanced Study, Tsinghua University, Beijing 100084, China}

\date{\today}

\begin{abstract}
Lee--Yang (LY) zeros play a fundamental role in the formulation of statistical physics in terms of (grand) partition functions, and assume theoretical significance for the phenomenon of phase transitions. In this paper, motivated by recent progress in cold Rydberg atom experiments, we explore the LY zeros in classical Rydberg blockade models. We find that the distribution of zeros of partition functions for these models in one dimension (1d) can be obtained analytically. We prove that all the LY zeros are real and negative for such models with arbitrary blockade radii. Therefore, no phase transitions happen in 1d classical Rydberg chains. We investigate how the zeros redistribute as one interpolates between different blockade radii. We also discuss possible experimental measurements of these zeros.\\

\noindent
\textbf{Keywords}:  Lee--Yang zeros, Rydberg atom, statistical mechanics
\end{abstract}
\maketitle

\section{Introduction}
Much of modern statistical physics builds on the concept of (grand) partition functions. For finite systems in general, they are positive, analytic functions in terms of physical quantities, and, in particular, are smooth functions with regard to temperature $T$ when the latter is finite. Historically, this fact caused some confusion as to how discontinuity, as occurs in phase transitions, can emerge from such a smooth function, before it became clear that phase transitions can only happen in the thermodynamic limit. However, it was not until the work of Lee and Yang~\cite{Yang1952,Lee1952} when a more detailed description of the relation between the analyticity of partition functions and phase transitions was put forward. 

Exactly 70 years ago in 1952, in two seminal papers~\cite{Yang1952,Lee1952}, Lee and Yang proposed to study the complex zeros of the (grand) partition function as a polynomial of some sort of fugacity, and showed how the distribution of such zeros in the thermodynamic limit relates to the existence and properties of phase transitions. They proved that if there exists a region containing part of the positive real axis with no zeros inside, in the corresponding regime of physical parameters no phase transition can happen. Conversely, if the zeros accumulate to a point on the positive real axis in the thermodynamic limit, phase transitions will generally ensue. As a concrete example, they considered the Ising model with arbitrary ferromagnetic couplings, and proved that all the zeros are located on the unit circle with the fugacity defined as $\exp(-\beta h)$, where $h$ is the magnetic field. A direct corollary is that phase transitions are not possible unless $h=0$ in the Ising model. 

The original construction of Lee and Yang focused on spin-1/2 Ising ferromagnets, or equivalently attractive lattice gases. Generalizations of this program to higher spins and Heisenberg models soon followed~\cite{Asano1968, Suzuki1968a, Suzuki1968b, Griffiths1969, Asano1970, Ruelle1971, Suzuki1971, Kurtze1978}. In the next decades, results on other geometries and interactions, in particular antiferromagnetic Ising models, became available~\cite{Lieb1972,Heilmann1972,Dobrushin1985,Beauzamy1996,Kim2004,Hwang2010,Lebowitz2012,Lebowitz2016}. There are also more recent theoretical progresses as well as generalizations to dynamical and/or quantum systems~\cite{Heyl2013,Brandner2017,Deger2019,Deger2020,Kist2021}. The endeavor of characterizing complex zeros of real polynomials with positive coefficients has attracted interest from the mathematics community alike~\cite{Kurtz1992,Borcea2009a,Borcea2009b,Ruelle2010}.

While originally a purely theoretical discussion with no direct experimental relevance envisioned, Lee--Yang (LY) zeros turn out experimentally measurable using nuclear magnetic resonance (NMR) techniques~\cite{Wei2012,Peng2015}. The idea is to carefully couple a collection of spins to a probe spin, and extract the zeros from its quantum evolution. Recently, cold Rydberg atoms have attracted extensive research efforts for their versatile control and measurements together with strong dipole interactions. Two major frontiers are unfolding --- First, various novel phases and phase transitions emerge from geometries enabled by single-atom manipulations, including one dimensional (1d) chains~
\cite{Bernien2017,Keesling2019} and two dimensional (2d) square~\cite{Satzinger2021,Ebadi2021, Samajdar2020, Kalinowski2022} and kagome lattices~\cite{Verresen2021,Semeghini2021,Samajdar2021,Cheng2021,Giudici2022}. Even in the seemingly innocuous 1d case, there are some debates on the nature of the phase transitions~\cite{Fendley2004, Samajdar2018, Giudici2019, Chepiga2019,Rader2019,Maceira2022}, a caricature of the rich and subtle physics in such systems. Second, the constrained Hilbert space drastically changes the dynamical behavior, leading to violations of the eigenstate thermolization hypothesis (ETH)~\cite{Bernien2017}, which is in turn closely related to the quantum many body scar states~\cite{Turner2018,Serbyn2021}.

Motivated by these exciting progresses, we study how the LY zeros distribute in classical Rydberg atom systems. Here by ``classical'' we mean that all the terms in the Hamiltonian commute with each other, yet the Hilbert space is still constrained by the Rydberg blockade. One of the important features of these models, as we discover in our analysis, is that the distribution of their LY zeros can be obtained rigorously. This is in contrast to general cases, where the distribution of LY zeros does not have a clear pattern. We provide a rigorous proof that in 1d for any blockade radius the zeros are real and negative.  Therefore, no phase transitions occur at finite temperature. This is in agreement with the general belief that classical phase transitions do not happen in 1d systems at finite temperature.
Our analysis adds the classical 1d Rydberg chain to a small number of models whose distribution of zeros of the partition function can be obtained  analytically, making the study of the mathematical structure of such models interesting in its own right.
Beyond rigorous analytical discussions, we also numerically investigate how the zeros redistribute as the Rydberg blockade interpolates between different radii. Finally, inspired by previous experimental efforts, we propose a way to measure the LY zeros of the Rydberg chain, a seemingly abstract quantity, in cold atom experiments, bringing our predictions to physical reality.

\section{Rydberg blockade Hamiltonian, partition function, and LY zeros}
We begin with the classical Rydberg blockade Hamiltonian,
\begin{equation}\label{eq:HPXP}
H=-\Delta\sum_i n_i,
\end{equation}
where $n_i=0,1$ is the number of Rydberg excitation on each site $i$ and $\Delta$ is the detuning. Generally there is a Rabi oscillation term that couples the ground state and the Rydberg state, which we assume to be small and ignore here. The effect of the Rydberg blockade is taken into account by requiring that no two atoms can be excited within a radius of $r$, or that there can be at most one Rydberg atom in each consecutive $(r+1)$ sites. For a chain with $n$ sites and open boundary conditions, the number of different configurations with $m$ Rydberg atoms, $F_n^m$, is
\begin{equation}
F_n^m=
\begin{pmatrix}
n-r(m-1)\\
m
\end{pmatrix}
\end{equation}
from elementary combinatorics. These coefficients satisfy a recursion relation
\begin{equation}
F_{n+1}^m=F_{n}^m+F_{n-r}^{m-1}.
\end{equation}
To see this, notice that either the $(n+1)$-th site is unoccupied, in which case one is free to put $m$ Rydberg atoms in the first $n$ sites, giving the first term, or the $(n+1)$-th site is occupied, blocking sites $n-r+1$ to $n$, giving the second term. To proceed we introduce the fugacity $y=\exp(\beta\Delta)$, and the partition function is
\begin{equation}
Z_n(y)=\sum_{m=0}^{\lfloor(n+r)/(r+1)\rfloor}F_n^m y^m,
\end{equation}
where $\lfloor\cdot\rfloor$ denotes the floor function. The partition function in turn satisfies
\begin{equation}
Z_{n+1}(y)=Z_{n}(y)+yZ_{n-r}(y),\label{eq:rec}
\end{equation}
together with initial conditions
\begin{equation}
Z_n(y)=1+ny,\ 1\leq n\leq r+1.
\end{equation}
From now on we deem $y$ as a complex variable and explore the complex roots (zeros) of the partition function as a polynomial in $y$. The first main result we obtain is 
\begin{theorem}
For any $r,n$, all the zeros of the partition function as a polynomial of $y$ are real and negative.\label{prop:1}
\end{theorem}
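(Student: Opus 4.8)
The plan is to first reduce the claim to a statement about real-rootedness alone. Since every coefficient $F_n^m=\binom{n-r(m-1)}{m}$ is strictly positive and $F_n^0=1$, we have $Z_n(y)\ge 1>0$ for every $y\ge 0$; hence $Z_n$ can have no nonnegative real zero, and it suffices to prove that all of its zeros are real. Negativity then comes for free, so the entire content of the proposition is real-rootedness.

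To establish real-rootedness I would argue by strong induction on $n$, carrying along not merely the fact that each $Z_n$ is real-rooted but a quantitative interlacing relation among the $r+1$ consecutive polynomials $Z_{n-r},\dots,Z_n$. Concretely, the induction hypothesis I would maintain is that these $r+1$ polynomials admit a \emph{common interlacing}: a single decreasing sequence of points on the negative axis that separates the (real, negative) roots of every member of the window. The degrees are governed by $\deg Z_n=\lfloor (n+r)/(r+1)\rfloor$, so they stay constant for $r+1$ consecutive indices and then increase by one. Within such a block the neighboring polynomials share a degree, while at a block boundary the degree jumps by one; both situations are compatible with the common-interlacing bookkeeping, and keeping track of which regime one is in is part of the setup.

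The inductive step rests on the classical interlacing-preservation lemma: if two real-rooted polynomials with positive leading coefficients have a common interlacing, then any sum (more generally any nonnegative combination) of them is again real-rooted, with roots that interlace the summands. I would apply this to the recursion $Z_{n+1}=Z_n+yZ_{n-r}$, read as a sum of the two real-rooted polynomials $Z_n$ and $yZ_{n-r}$; the latter is real-rooted by the induction hypothesis, its roots being $0$ together with the negative roots of $Z_{n-r}$. Once $Z_n$ and $yZ_{n-r}$ are shown to share a common interlacing, the lemma yields that $Z_{n+1}$ is real-rooted and interlaces $Z_n$, which is exactly what is needed to advance the window and re-establish the hypothesis for $\{Z_{n-r+1},\dots,Z_{n+1}\}$.

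The main obstacle is precisely verifying that $Z_n$ and $yZ_{n-r}$ have a common interlacing. Two features make this delicate. First, the recursion skips $r$ indices, so the interlacing information must be propagated across the full width of the window rather than between neighbors; since plain interlacing is not transitive, this is why I track a common interlacer for the whole block instead of relating consecutive pairs. Second, the prefactor $y$ inserts an extra root at the origin into $yZ_{n-r}$, and I must check that this new, largest root lands in the correct gap relative to the roots of $Z_n$ so as not to spoil the alternation. Controlling the position of this root at $0$, and showing that the negative roots of $Z_{n-r}$ sit correctly against those of $Z_n$ across the $r$-step gap, is the technical heart of the argument. As an independent check one can recast the statement through the Aissen--Schoenberg--Whitney theorem: $Z_n$ has only real negative zeros if and only if the coefficient sequence $(F_n^m)_m$ is a P\'olya frequency sequence, i.e. the associated Toeplitz matrix is totally positive. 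Proving that total positivity directly, for instance via a nonintersecting-lattice-path model of the blockade configurations, would furnish an alternative route that bypasses the interlacing bookkeeping.
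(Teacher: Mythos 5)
Your overall architecture is sound and in fact closely parallels the paper's: the reduction to real-rootedness via positivity of the coefficients is correct, and the paper's Lemma~1 is precisely a common-interlacing statement for each block of $r+1$ consecutive polynomials, propagated by induction through the recursion $Z_{n+1}=Z_n+yZ_{n-r}$. However, there is a genuine gap: the step you yourself identify as ``the technical heart of the argument'' --- verifying that $Z_n$ and $yZ_{n-r}$ admit a common interlacing, i.e.\ that the root inserted at the origin and the negative roots of $Z_{n-r}$ land in the correct gaps of $Z_n$ across the $r$-index offset --- is never carried out. This is not a routine omission. The statement ``the window $Z_{n-r},\dots,Z_n$ has a common interlacer'' is, as you phrase it, too coarse: because the window straddles two degree blocks, you must specify \emph{which} cell of the interlacer each polynomial's $m$-th root occupies, and the argument only closes if the $m$-th root of $Z_{n-r}$ sits strictly below the $m$-th root of $Z_n$ and strictly above its $(m+1)$-th root. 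Without pinning down that alignment, the interlacing-preservation lemma cannot be invoked, and the induction hypothesis cannot be re-established for the shifted window. The paper supplies exactly this missing content by brute force: it evaluates $Z_{n+1}$ at $y=0$ and at the roots of the polynomials in the previous block, uses the recursion to determine the sign of $Z_{n+1}$ at each of these points, and applies the intermediate value theorem to trap one root of $Z_{n+1}$ in each prescribed interval, yielding the full chain of inequalities
\begin{equation*}
y_{1,(r+1)i+r+1}>\cdots>y_{1,(r+1)i+1}>y_{2,(r+1)i+r+1}>\cdots
\end{equation*}
that your plan presupposes. Your alternative route via the Aissen--Schoenberg--Whitney theorem and total positivity of the Toeplitz matrix of $(F_n^m)_m$ is an interesting independent strategy, but it is likewise only gestured at; the total positivity of that matrix is not obvious and would itself require a nonintersecting-path or determinantal argument. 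As it stands, the proposal is a credible proof outline with the decisive lemma left unproved.
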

The proof is elementary yet slightly involved. Readers who are not interested in the proof can skip to the next section.

\begin{figure}
\centering
\includegraphics[width=0.9\columnwidth]{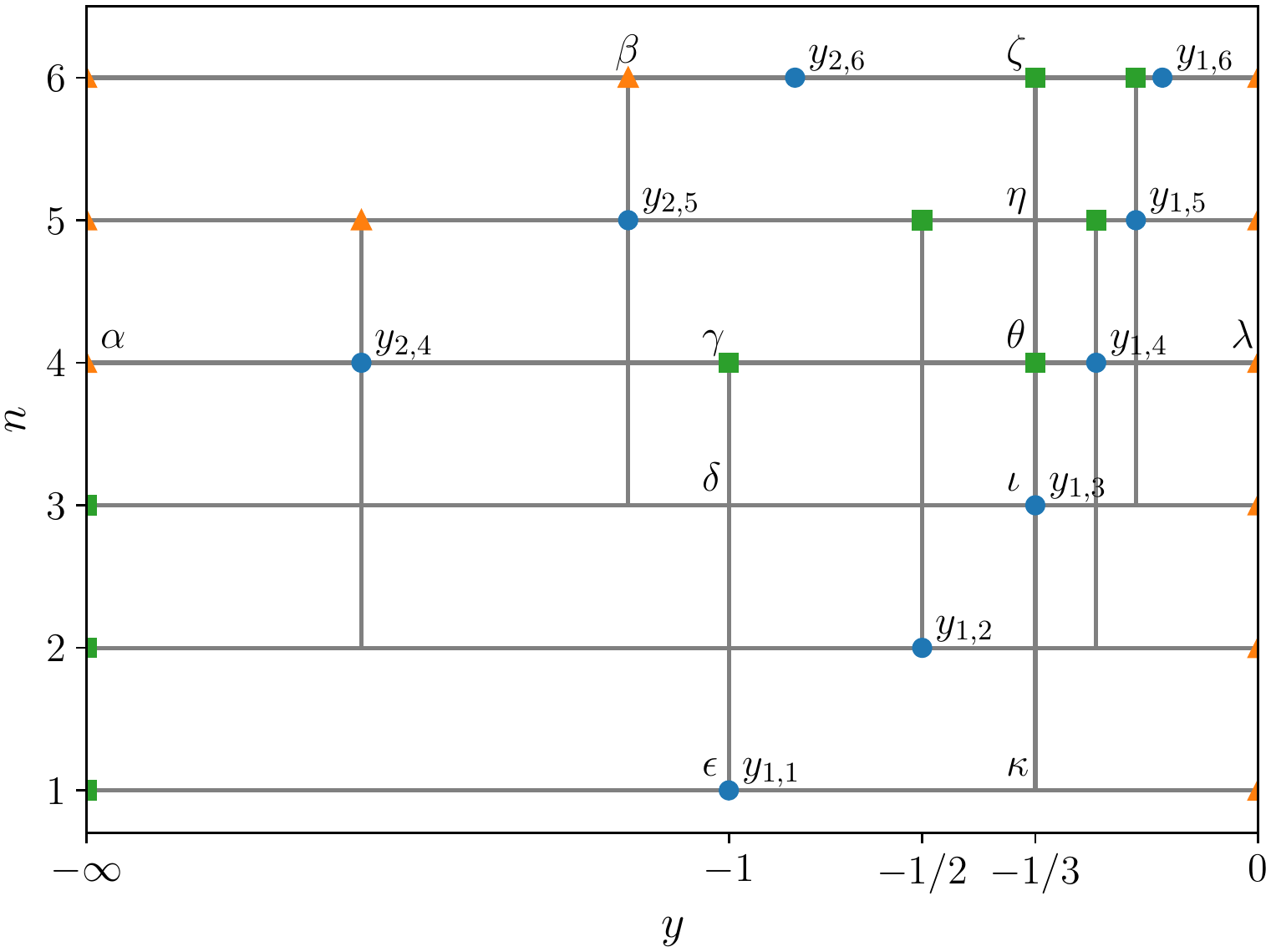}
\caption{An illustration of the proof of Lemma \ref{lemma} for the case $r=2$. Blue dots denote the zeros and orange triangles (green squares) indicate that the polynomial is positive (negative) at that point. As a result of the recursion relation (\ref{eq:rec}), the zeros in each group of three polynomials (e.g.,~$y_{1,1}, y_{1,2},y_{1,3}$) determine the sign of the polynomials in the next group, which in turn guarantees the existence of zeros in between.
A few examples are in order. We label the values of the polynomials with Greek letters as shown in the figure. First, $\delta<0$ and $\epsilon=0$ give $\gamma<0$, which together with $\alpha>0$ fixes $y_{2,4}$. Then, $\iota=0$ and $\kappa>0$ give $\theta<0$, which together with $\lambda>0$ fixes $y_{1,4}$. As a final example, $\eta<0$ and $\iota=0$ give $\zeta<0$, which together with $\beta>0$ fixes $y_{2,6}$. Note that we only mark the zeros and the points which are \emph{directly} used to fix the zeros, e.g., $\alpha$, $\beta$, $\gamma$, $\epsilon$, $\zeta$, $\theta$, $\iota$, and $\lambda$.}
\label{fig:proof}
\end{figure}

The main idea of the proof is as follows, which is similar to the one given in~\cite{Heilmann1972}: for any fixed $r$, we already know that the partition function is a polynomial of degree $\lfloor(n+r)/(r+1)\rfloor$. That is, as $n$ increases, the first $(r+1)$ polynomials have degree one, the next $(r+1)$ ones degree two, and so on. The claim is proved by finding $\lfloor(n+r)/(r+1)\rfloor$ real roots of the polynomials in each group. This follows from the following lemma.
\begin{lemma}\label{lemma}
For each $i\geq0$ and $j=1,\cdots,r+1$, the $i+1$ roots of $Z_{n=(r+1)i+j}(y)$ are real and different. Furthermore, when they are ordered by $0>y_{1,n}>y_{2,n}>\cdots>y_{i+1,n}$, they satisfy
\begin{equation}
\begin{split}
&y_{1,(r+1)i+r+1}>y_{1,(r+1)i+r}>\cdots>y_{1,(r+1)i+1}\\
>{}&y_{2,(r+1)i+r+1}>y_{2,(r+1)i+r}>\cdots>y_{2,(r+1)i+1}\\
>{}&\cdots\\
>{}&y_{i+1,(r+1)i+r+1}>y_{i+1,(r+1)i+r}>\cdots>y_{i+1,(r+1)i+1}.
\end{split}
\end{equation}
\end{lemma}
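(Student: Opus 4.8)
The plan is to prove the lemma by induction on the group index $i$, descending from the roots of one group to the next via the recursion (\ref{eq:rec}). The base case $i=0$ is immediate: for $1\le j\le r+1$ we have $Z_{j}=1+jy$ with the single root $-1/j$, and since $-1/(r+1)>-1/r>\cdots>-1$ the claimed ordering already holds. For the inductive step I would assume that every polynomial in group $i$, namely $Z_{(r+1)i+1},\dots,Z_{(r+1)i+r+1}$, has $i+1$ real, negative, distinct roots obeying the stated interlacing, and deduce the analogous statement for group $i+1$.

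The key structural observation is that (\ref{eq:rec}), rewritten as $Z_n=Z_{n-1}+yZ_{n-1-r}$, expresses each member of group $i+1$ through already-known data. Writing $P_{i,j}=Z_{(r+1)i+j}$, a short index computation gives $P_{i+1,1}=P_{i,r+1}+yP_{i,1}$ and, for $2\le j\le r+1$, $P_{i+1,j}=P_{i+1,j-1}+yP_{i,j}$, so that each new polynomial is built from one member of group $i$ (known by induction) together with one earlier member of group $i+1$. I would therefore run an inner induction on $j$, constructing the group from left to right. At each step these relations let me compute the sign of the new polynomial at every root already placed in the global ordering: at a root $y^\ast$ of $P_{i+1,j-1}$ the recursion gives $P_{i+1,j}(y^\ast)=y^\ast P_{i,j}(y^\ast)$, whose sign is reversed relative to $P_{i,j}(y^\ast)$ because $y^\ast<0$, while at a root $y^\ast$ of $P_{i,j}$ it gives $P_{i+1,j}(y^\ast)=P_{i+1,j-1}(y^\ast)$. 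Combined with the positivity of every $Z_n$ at $y=0$ and the sign at $y\to-\infty$ fixed by the positive leading coefficient, these two facts determine a complete sign pattern along the negative axis.

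With the sign pattern in hand, the intermediate value theorem forces a root of $P_{i+1,j}$ strictly between each consecutive pair of sampling points at which the sign changes; counting the sign changes should produce exactly $i+2$ real negative roots, which matches the degree of $P_{i+1,j}$ and hence accounts for all of them, ruling out complex roots. Reinserting these roots into the ordered list and verifying that they fall into the predicted slots reproduces the two-level ordering of the lemma: primary order by root index $k$, secondary order by $n$ within a group. This is precisely the bookkeeping illustrated in Fig.~\ref{fig:proof}.

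The main obstacle is the sign bookkeeping itself: one must check that at every required location the polynomial genuinely changes sign rather than merely touching zero, which amounts to showing that $P_{i+1,j-1}$ and $P_{i,j}$ never share a root, so that the interlacing stays strict and the $i+2$ roots remain distinct. I expect this to follow from the strict interlacing supplied by the induction hypothesis, but making it airtight --- ensuring the signs propagate consistently across the boundary between groups $i$ and $i+1$, and that the first polynomial $P_{i+1,1}$, assembled from two members of group $i$ rather than one, obeys the same pattern --- is the delicate part that has to be treated carefully as a function of $j$.
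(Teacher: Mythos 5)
Your proposal follows essentially the same route as the paper: induction over groups of $r+1$ polynomials, decomposing each new polynomial via the recursion into one member of the previous group plus one earlier member of the current group, evaluating signs at the already-located roots, and invoking the intermediate value theorem with a degree count to capture all $i+2$ roots and their interlacing positions. The delicate points you flag (strictness of the sign changes and the special form of $P_{i+1,1}$) are resolved in the paper exactly as you anticipate, by the strict ordering carried in the induction hypothesis.
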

\begin{proof}
We first prove the case $r=2$ for simplicity, see Fig.~\ref{fig:proof}, but the same proof works for any $r$, see below. The first 3 polynomials ($i=0$) obviously satisfy the lemma. If the lemma is satisfied for $Z_{3i+1}, Z_{3i+2}, Z_{3i+3}$, from continuity we know the sign of the polynomial in each segment partitioned by the zeros. For example, $Z_{3i+k}(y)>0$ when $y_{1,3i+k}<y<0$, $k=1,2,3$. We obtain from the recursion relation
\begin{equation}
Z_{3(i+1)+1}(0)>0
\end{equation}
and
\begin{equation}
\begin{split}
&Z_{3(i+1)+1}(y_{1,3i+3})\\
={}&Z_{3i+3}(y_{1,3i+3})+y_{1,3i+3} Z_{3i+1}(y_{1,3i+3})\\
={}&y_{1,3i+3} Z_{3i+1}(y_{1,3i+3})<0.
\end{split}
\end{equation}
Therefore, there exists a zero of $Z_{3(i+1)+1}$, namely $y_{1,3(i+1)+1}$, in the open interval $(y_{1,3i+3},0)$ by the intermediate value theorem. Similar calculations give the other $i+1$ zeros, $y_{m,3(i+1)+1}$, in the intervals $(y_{m,3i+3},y_{m-1,3i+1})$, where $y_{i+2,3i+3}=-\infty$. Next, the zeros of $Z_{3(i+1)+2},Z_{3(i+1)+3}$ can be found in the same way, giving inequalities including $y_{m,3(i+1)+1}<y_{m,3(i+1)+2}<y_{m-1,3i+2}$ and $y_{m,3(i+1)+2}<y_{m,3(i+1)+3}<y_{m-1,3i+3}$. Combining the results above, we find
\begin{equation}
\begin{split}
y_{m,3(i+1)+1}&{}<y_{m,3(i+1)+2}<y_{m,3(i+1)+3}\\
&{}<y_{m-1,3i+3}<y_{m-1,3(i+1)+1},
\end{split}
\end{equation}
which concludes the proof for $r=2$. 

The proof for $r\geq3$ follows exactly the same spirit and is presented below for the readers who wish to read a general proof. To facilitate comparison with the special case of $r=2$ above, we translate the proof to general $r$ verbatim as follows.

The first $(r+1)$ polynomials ($i=0$) obviously satisfy the lemma. If the lemma is satisfied for $Z_{i(r+1)+j},j=1,\dots,r+1$, from continuity we know the sign of the polynomial in each segment partitioned by the zeros. For example, $Z_{i(r+1)+k}(y)>0$ when $y_{1,i(r+1)+k}<y<0$, $k=1,\dots,r+1$. We obtain from the recursion relation
\begin{equation}
Z_{(i+1)(r+1)+1}(0)>0
\end{equation}
and
\begin{equation}
\begin{split}
&Z_{(i+1)(r+1)+1}(y_{1,i(r+1)+r+1})\\
={}&Z_{i(r+1)+r+1}(y_{1,i(r+1)+r+1})\\
&+y_{1,i(r+1)+r+1} Z_{i(r+1)+1}(y_{1,i(r+1)+r+1})\\
={}&y_{1,i(r+1)+r+1} Z_{i(r+1)+1}(y_{1,i(r+1)+r+1})<0.
\end{split}
\end{equation}
Therefore, there exists a zero of $Z_{(i+1)(r+1)+1}$, namely $y_{1,(i+1)(r+1)+1}$, in the open interval $(y_{1,i(r+1)+r+1},0)$ by the intermediate value theorem. Similar calculations give the other $i+1$ zeros, $y_{m,(i+1)(r+1)+1}$, in the intervals $(y_{m,i(r+1)+r+1},y_{m-1,i(r+1)+1})$, where $y_{i+2,i(r+1)+r+1}=-\infty$. Next, the zeros of $Z_{(i+1)(r+1)+j},j=2,\dots,r+1$ can be found in the same way, giving inequalities including $y_{m,(i+1)(r+1)+j}<y_{m,(i+1)(r+1)+j+1}<y_{m-1,i(r+1)+j+1},j=1,\dots,r$. Combining the results above, we find
\begin{equation}
\begin{split}
&y_{m,(i+1)(r+1)+1}<y_{m,(i+1)(r+1)+2}<\cdots\\
<{}&y_{m,(i+1)(r+1)+r+1}
<y_{m-1,i(r+1)+r+1}\\
<{}&y_{m-1,(i+1)(r+1)+1},
\end{split}
\end{equation}
which concludes the proof for general $r$.
\end{proof}

This lemma leads directly to Proposition \ref{prop:1}, as we have found all the roots of the partition function. As stated in the lemma, they are all real and negative.

We note that in two previous papers~\cite{Alcaraz2020a,Alcaraz2020b} on integrable systems generalized from free (para)fermions~\cite{Fendley2019}, a result identical to Proposition~\ref{prop:1} was conjectured based on numerical evidence, but a rigorous proof was lacking. The above proof fills this gap and finds applications in the field of integrable systems.

\section{Mathematical structure of the distribution of LY zeros}\label{sec:other}
It is possible to formally solve the recursion relation in Eq.~\eqref{eq:rec}. We first extend the initial condition to negative $n$ for convenience. One easily sees that
\begin{equation}
\begin{split}
&Z_n(y)=1,\ -r\leq n\leq 0,\\
&Z_n(y)=0,\ -2r\leq n\leq -r-1.
\end{split}
\end{equation}
We then have
\begin{equation}
\begin{split}
Z_n&=(1,0,\cdots,0)
\begin{pmatrix}
Z_n\\Z_{n-1}\\\vdots\\Z_{n-r}
\end{pmatrix}\\
&=(1,0,\cdots,0)
\begin{pmatrix}
1 & & & & y\\
1 & & & & \\
& 1 & & & \\
& & \ddots & & \\
& & & 1 & 
\end{pmatrix}^{n+r}
\begin{pmatrix}
Z_{-r}\\Z_{-r-1}\\\vdots\\Z_{-2r}
\end{pmatrix}\\
&=(1,0,\cdots,0)
\begin{pmatrix}
1 & & & & y\\
1 & & & & \\
& 1 & & & \\
& & \ddots & & \\
& & & 1 & 
\end{pmatrix}^{n+r}
\begin{pmatrix}
1\\0\\\vdots\\0
\end{pmatrix}.
\end{split}
\end{equation}
To proceed we need to diagonalize
\begin{equation}
A=\begin{pmatrix}
1 & & & & y\\
1 & & & & \\
& 1 & & & \\
& & \ddots & & \\
& & & 1 & 
\end{pmatrix}.
\end{equation}
The characteristic polynomial reads
\begin{equation}
\lambda^{r+1}-\lambda^{r}-y=0.\label{eq:ch_poly}
\end{equation}
Assuming all the eigenvalues $\lambda_i$ to be different (we will verify this soon), we see that $A$ is diagonalized by 
\begin{equation}
U=\begin{pmatrix}
1 & 1 & \cdots & 1\\
\lambda_1^{-1} & \lambda_2^{-1} & \cdots & \lambda_{r+1}^{-1}\\
\lambda_1^{-2} & \lambda_2^{-2} & \cdots & \lambda_{r+1}^{-2}\\
\vdots & \vdots & \ddots & \vdots\\
\lambda_1^{-r} & \lambda_2^{-r} & \cdots & \lambda_{r+1}^{-r}
\end{pmatrix},
\end{equation}
which is a Vandermonde matrix of $\lambda_i^{-1}$. We arrive at
\begin{equation}
\begin{split}
Z_n&=(U\,\mathrm{diag}\{\lambda_1^{n+r},\cdots,\lambda_{r+1}^{n+r}\}\,U^{-1})_{1,1}\\
&=\frac{1}{\prod_i\lambda_i}\sum_i\frac{\lambda_i^{n+r+1}}{\prod_j'(\lambda_j^{-1}-\lambda_i^{-1})}.
\end{split}
\end{equation}
For $r=1$, or nearest-neighbor blockade, $\lambda_{1,2}=(1\pm\sqrt{1+4y})/2$,
\begin{equation}
\begin{split}
Z_n&=\frac{\lambda_1^{n+2}-\lambda_2^{n+2}}{\lambda_1\lambda_2(\lambda_2^{-1}-\lambda_1^{-1})}
=\frac{\lambda_1^{n+2}-\lambda_2^{n+2}}{\lambda_1-\lambda_2}\\
&=\frac{1}{\sqrt{4y+1}}\\
&\times\bigg(\Big(\frac{1+\sqrt{4y+1}}{2}\Big)^{n+2}
-\Big(\frac{1-\sqrt{4y+1}}{2}\Big)^{n+2}\bigg).
\end{split}\label{eq:r1}
\end{equation}
To find the zeros we set $Z_n(y)=0$ and get
\begin{equation}
\Big(\frac{1+\sqrt{4y+1}}{1-\sqrt{4y+1}}\Big)^{n+2}=1.
\end{equation}
It's straightforward to find
\begin{equation}
y=-\frac{1+\tan^2(\phi/2)}{4},
\end{equation}
where
\begin{equation}
\phi=2\pi m/(n+2), m=1,\cdots,\lfloor(n+1)/2\rfloor,
\end{equation}
which is real, negative and always smaller than $-1/4$.

In order for Eq.~\eqref{eq:ch_poly} to have degenerate roots, the latter must also be roots of the derivative of the original polynomial. This can only happen when $\lambda=r/(r+1)$, and correspondingly $y_0=-r^r/(r+1)^{r+1}$. This $y$ is also where the number of real roots of Eq.~\eqref{eq:ch_poly} changes, hinting at its special role. Indeed, we have the following result.
\begin{theorem}\label{prop:2}
All the zeros of the partition function as in Proposition~\ref{prop:1} are smaller than $y_0$. 
\end{theorem}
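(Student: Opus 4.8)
The plan is to establish the stronger pointwise statement that $Z_n(y)>0$ for every real $y\ge y_0$. Because Proposition~\ref{prop:1} already tells us that all zeros are real and negative, excluding real zeros in $[y_0,\infty)$ automatically forces every zero to lie strictly below $y_0$. For $y\ge 0$ this is trivial since the coefficients $F_n^m$ are nonnegative, so the entire content is the window $y\in[y_0,0)$.

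The device I would use is to rescale by a root of the characteristic polynomial \eqref{eq:ch_poly}. For $y\in[y_0,0)$ the map $f(\lambda)=\lambda^{r+1}-\lambda^r=\lambda^r(\lambda-1)$ sends $(0,1)$ onto $[y_0,0)$, with minimum $y_0$ attained at $\lambda=r/(r+1)$; hence there is a real root $\lambda_b\in(0,1)$ of $\lambda^{r+1}-\lambda^r-y=0$. Putting $\tilde Z_n=Z_n/\lambda_b^{\,n}$ and using $y=\lambda_b^{r+1}-\lambda_b^r$, the recursion \eqref{eq:rec} turns into
\begin{equation}
\tilde Z_{n+1}=a\,\tilde Z_n-(a-1)\tilde Z_{n-r},\qquad a:=\lambda_b^{-1}>1 ,
\end{equation}
so that the consecutive differences $e_n:=\tilde Z_n-\tilde Z_{n-1}$ satisfy a recursion with manifestly positive coefficients,
\begin{equation}
e_{n+1}=(a-1)\sum_{j=n-r+1}^{n}e_j .
\end{equation}

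Granting that the first block $e_2,\dots,e_{r+1}$ is positive, this recursion propagates positivity to all $e_n$, so $\tilde Z_n$ is strictly increasing; together with $\tilde Z_1=(1+y)/\lambda_b>0$ (note $1+y\ge 1+y_0>0$) this yields $Z_n=\lambda_b^{\,n}\tilde Z_n>0$, completing the argument. The strict inequality also holds at the endpoint $y=y_0$, which is where the bound becomes sharp: there $\lambda_b=r/(r+1)$ is the \emph{double} root, but nothing above uses simplicity of $\lambda_b$, only that it is a root lying in $(0,1)$, so the same reasoning gives $Z_n(y_0)>0$ and hence the strict bound.

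The step I expect to be the main obstacle is precisely the base case, since this is where the special value $y_0$ must enter. Using $Z_m=1+my$ for $1\le m\le r+1$ together with $y=-\lambda_b^r(1-\lambda_b)$, a short computation gives, for $2\le m\le r+1$,
\begin{equation}
e_m=\lambda_b^{-m}(1-\lambda_b)\bigl[\,1-m(1-\lambda_b)\lambda_b^r-\lambda_b^{r+1}\bigr].
\end{equation}
The bracket is decreasing in $m$, so it is smallest at $m=r+1$, where it equals $h(\lambda_b)$ with $h(\lambda)=1-(r+1)\lambda^r+r\lambda^{r+1}$. Since $h(1)=0$ and $h'(\lambda)=r(r+1)\lambda^{r-1}(\lambda-1)<0$ on $(0,1)$, we get $h(\lambda_b)>0$, so every $e_m$ in the block is positive and the induction is seeded. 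The only subtlety to watch is that this positivity must hold uniformly as $y\to y_0$ (i.e.\ $\lambda_b\to r/(r+1)$), which the estimate $h(\lambda_b)>0$ on all of $(0,1)$ supplies.
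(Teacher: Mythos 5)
Your argument is correct, and it proves the same intermediate statement as the paper --- that $Z_n(y)>0$ pointwise for all $y\in[y_0,0)$, which together with Proposition~\ref{prop:1} forces every zero below $y_0$ --- but it runs the induction with a different invariant. The paper's proof establishes the uniform ratio bound $Z_{n+1}/Z_n>r/(r+1)$ by strong induction over a window of $r$ steps: the product of $r$ consecutive ratios bounds $Z_i/Z_{i+r}$, and the recursion then closes the induction precisely when $1+y\left(\frac{r+1}{r}\right)^r\geq\frac{r}{r+1}$, i.e.\ $y\geq y_0$. Your version replaces the fixed constant $r/(r+1)$ by the $y$-dependent characteristic root $\lambda_b\in(0,1)$, so that the rescaled recursion $\tilde Z_{n+1}=a\tilde Z_n-(a-1)\tilde Z_{n-r}$ telescopes into a manifestly positivity-preserving recursion for the differences $e_n$; your invariant $e_n>0$ is equivalent to $Z_{n+1}/Z_n>\lambda_b$, and if $\lambda_b$ is taken to be the larger root this is actually a sharper ratio bound than the paper's. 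What your route buys is a structural explanation of why $y_0$ is the threshold (it is exactly where the characteristic polynomial loses its roots in $(0,1)$), at the cost of a genuinely harder base case: the paper sidesteps this entirely by extending the initial data to $Z_n=1$ for $-r\le n\le 0$, where all consecutive ratios equal $1$ and the invariant is trivial, whereas you must verify $e_2,\dots,e_{r+1}>0$ from $Z_m=1+my$ via the function $h(\lambda)=1-(r+1)\lambda^r+r\lambda^{r+1}$ --- a computation you carry out correctly, including the monotonicity in $m$ and the sign of $h$ on $(0,1)$. You might note that the same trick of extending the recursion to negative indices would let you seed your induction with $e_n$ for the shifted block just as trivially.
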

\begin{proof}
We only need to prove that when $y_0\leq y<0$, $Z_n(y)$ is always greater than zero. We prove it by showing a stronger condition, 
\begin{equation}
\frac{Z_{n+1}}{Z_{n}}>\frac{r}{r+1}.
\end{equation}
This obviously holds for $Z_n, n=-r,\cdots,-1$. Assume this to be true for $Z_n, n=i,\cdots,i+r-1$, then
\begin{equation}
\frac{Z_{i+r}}{Z_{i}}>\Big(\frac{r}{r+1}\Big)^r.
\end{equation}
The recursion relation gives
\begin{equation}
\frac{Z_{i+r+1}}{Z_{i+r}}=1+y\frac{Z_{i}}{Z_{i+r}}>1+y\Big(\frac{r+1}{r}\Big)^r.
\end{equation}
Then we have
\begin{equation}
\frac{Z_{i+r+1}}{Z_{i+r}}>\frac{r}{r+1}\Leftarrow
1+y\Big(\frac{r+1}{r}\Big)^r\geq\frac{r}{r+1} \Leftrightarrow y\geq y_0.
\end{equation}
\end{proof}
We also have the following corollary:
\begin{corollary}
For $y_0\leq y<0$, we have $Z_n(y)\to0$ as $n\to\infty$.
\end{corollary}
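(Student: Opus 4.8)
The plan is to extract the convergence directly from the recursion relation (\ref{eq:rec}) together with the positivity that already underlies the proof of Proposition~\ref{prop:2}, avoiding any reliance on the explicit eigenvalue expansion (which degenerates precisely at $y=y_0$). First I would record that for $y_0\leq y<0$ every partition function is strictly positive: since Proposition~\ref{prop:2} places all zeros strictly below $y_0$, the polynomial $Z_n$ cannot vanish on $[y_0,0)$, and as $Z_n(0)=1>0$ this forces $Z_n(y)>0$ throughout the interval. Equivalently, the bound $Z_{n+1}/Z_n>r/(r+1)>0$ established in the proof of Proposition~\ref{prop:2}, together with $Z_0=1$, gives positivity by induction.

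The second step is monotonicity. In the recursion $Z_{n+1}=Z_n+yZ_{n-r}$ the correction term $yZ_{n-r}$ is negative, because $y<0$ while $Z_{n-r}>0$ by the previous step. Hence $Z_{n+1}<Z_n$ for all $n$ large enough that $n-r\geq 0$, so the sequence $\{Z_n(y)\}_n$ is eventually strictly decreasing.

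The third step is the standard monotone-convergence conclusion: a sequence that is bounded below (here by $0$) and decreasing converges to some limit $L\geq 0$. To identify $L$, I would pass to the limit in the recursion. Since $\{Z_n\}$ converges, the shifted sequences $\{Z_{n+1}\}$ and $\{Z_{n-r}\}$ converge to the same value $L$; taking $n\to\infty$ in (\ref{eq:rec}) therefore yields $L=L+yL$, i.e.\ $yL=0$. As $y\leq y_0<0$ is nonzero, we conclude $L=0$, which is the claim.

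I do not anticipate a serious obstacle here; the only point requiring a little care is the endpoint $y=y_0$, where two eigenvalues of $A$ collide and the closed-form representation of $Z_n$ breaks down. The argument above is insensitive to this degeneracy because it uses only the recursion and positivity, both of which remain valid at $y=y_0$. An alternative route through the eigenvalue formula would instead demand a proof that every root $\lambda_i$ of (\ref{eq:ch_poly}) satisfies $|\lambda_i|<1$ on this range, a strictly more delicate estimate that the monotonicity argument renders unnecessary.
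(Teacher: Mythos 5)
Your proof is correct, and its first two steps (positivity of $Z_n$ on $[y_0,0)$ via Proposition~\ref{prop:2}, then monotone decrease from $Z_{n+1}=Z_n+yZ_{n-r}$ with $y<0$) are exactly the paper's. You diverge only in how the limit is identified: you invoke monotone convergence to get a limit $L\geq0$ and then pass to the limit in the recursion, obtaining $L=L+yL$ and hence $L=0$; the paper instead sharpens the monotonicity to $Z_{n-r}/Z_n>1$ and deduces $Z_{n+1}/Z_n<1+y<1$ (note $y_0=-r^r/(r+1)^{r+1}>-1$), which forces geometric decay. Your fixed-point argument is marginally more economical and, as you note, cleanly sidesteps the eigenvalue degeneracy at $y=y_0$; the paper's version buys an explicit exponential rate $Z_n=\cO((1+y)^n)$ rather than bare convergence. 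Both are complete and rigorous.
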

\begin{proof}
Using the recursion relation and Proposition \ref{prop:2}, we have
\begin{equation}\label{decrease}
    \frac{Z_{n+1}(y)}{Z_n(y)}=1+y\frac{Z_{n-r}(y)}{Z_n(y)}<1.
\end{equation}
Thus, $Z_n$ decreases monotonically with index $n$.
By realizing
\begin{equation}
    \frac{Z_{n-r}(y)}{Z_n(y)}>1,
\end{equation}
Eq. (\ref{decrease}) can be made stronger 
\begin{equation}
    \frac{Z_{n+1}(y)}{Z_n(y)}=1+y\frac{Z_{n-r}(y)}{Z_n(y)}<1+y.
\end{equation}
At last, the squeeze theorem gives
\begin{equation}
    \lim_{n\to\infty}Z_n(y)=0, \forall y\in[y_0,0).
\end{equation}
\end{proof}

\section{Interpolations between different blockade radii}
To better understand Rydberg chain systems, we investigate how the zeros redistribute as the blockade radius increases. To begin with we focus on the case $r=1$ to $r=2$, or nearest- to next-nearest-neighbor blockade. For this purpose we add a next-nearest-neighbor interacting term to the $r=1$ Hamiltonian and the new Hamiltonian reads
\begin{equation}\label{eq:interpolation}
H'=-\Delta\sum_i n_i+V\sum_i n_i n_{i+2},
\end{equation}
with the same condition that no two nearest neighbors can be excited simultaneously.
This is the classical ($w=0$) version of the Fendley--Sengupta--Sachdev model~\cite{Fendley2004}. We note that this model is different from the classical Rydberg blockade model discussed in previous sections. However, the same physics should be recovered in the following limits: $V=0$ corresponds to blockade radius $r=1$ and $V=\infty$ corresponds to $r=2$.
The partition function of the new Hamiltonian is then
\begin{equation}
Z'=\sum_n y^n \sum_{\{n_i\}}\exp(-\beta V\sum_i n_i n_{i+2}).\label{eq:r12}
\end{equation}

\begin{figure}
\centering
\includegraphics[width=0.95\columnwidth]{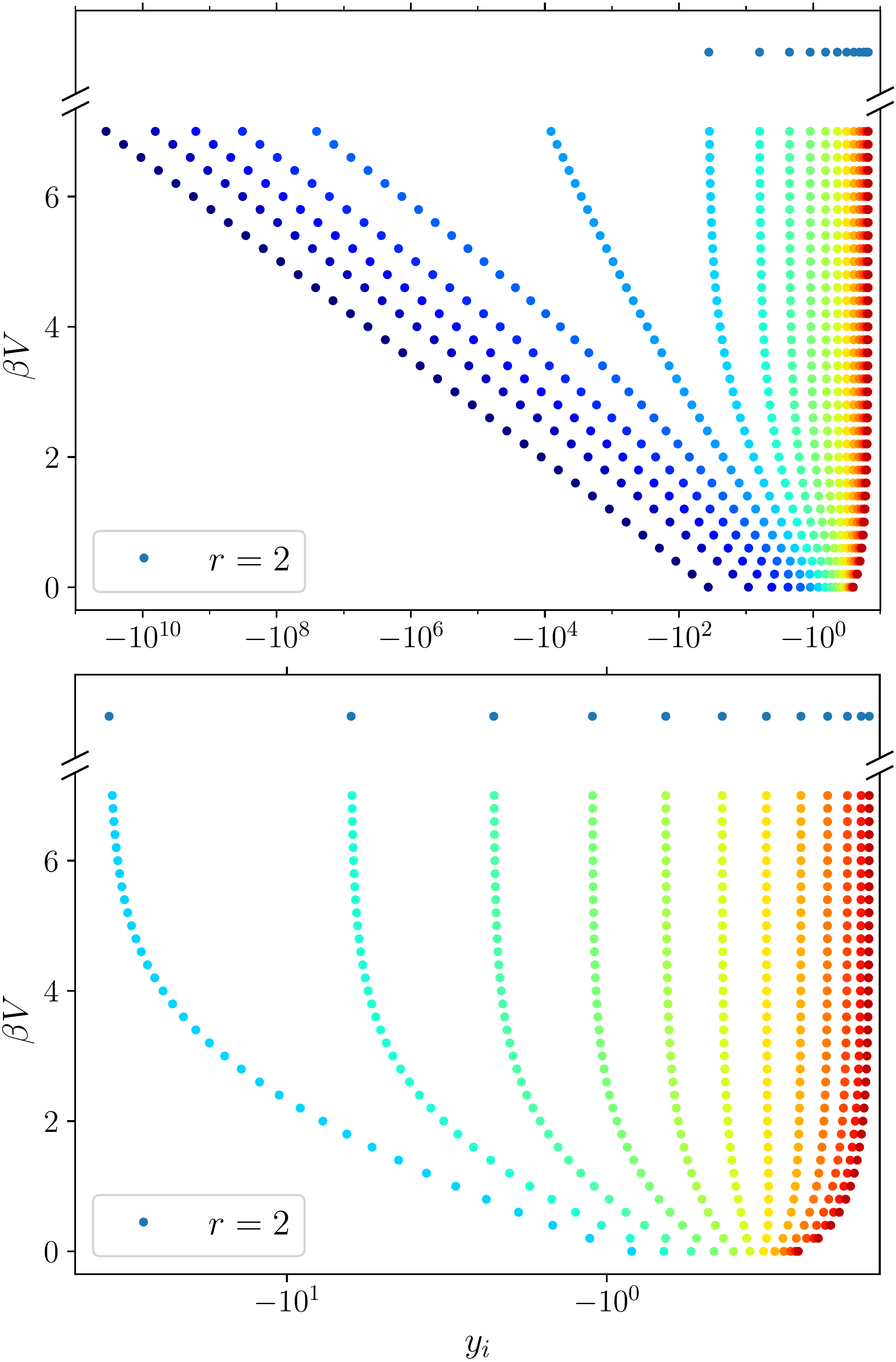}
\caption{The zeros of the partition function \eqref{eq:r12} for $n=36$. The largest $n/3=12$ of them converge to that of $r=2$ when $V$ goes to $\infty$, while the others diverge to $-\infty$ exponentially, which are omitted in the lower panel.}
\label{fig:numerics}
\end{figure}

\begin{figure}
\centering
\includegraphics[width=0.95\columnwidth]{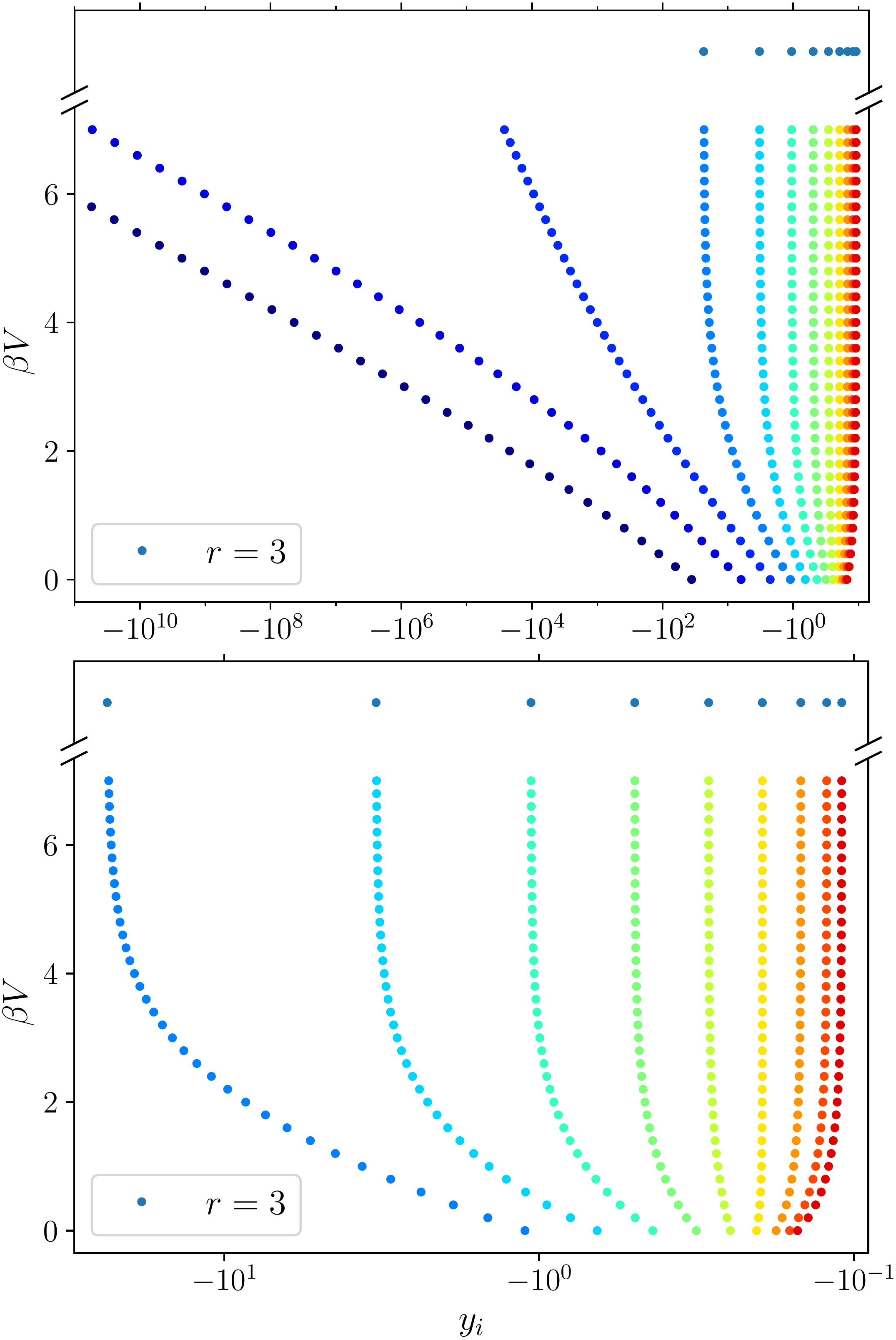}
\caption{The zeros of the partition function \eqref{eq:r23} for $n=36$. The largest $n/4=9$ of them converge to that of $r=3$ when $V$ goes to $\infty$, while the others diverge to $-\infty$ exponentially, which are omitted in the lower panel.}
\label{fig:numerics2}
\end{figure}

The Hamiltonian (\ref{eq:interpolation}) facilitates the understanding of Rydberg chains  in the following two aspects. First, it helps us understand Rydberg chains beyond the classical Rydberg blockade model. In this aspect, the distribution of LY zeros of partition function (\ref{eq:r12}) is obtained numerically. We calculate the zeros for $n=24,\cdots,36$. While for negative $V$ the zeros are in general complex, we find that when $V$ is positive the zeros are real. Therefore, we conjecture that the LY zeros remain living on the negative real axis for $V>0$. Consequently, no phase transitions happen at finite temperature for such systems with repulsive interactions.

Second, it sheds light on how the LY zeros of classical Rydberg blockade models with different blockade radii connect with each other.
As stated in the previous sections, for models with blockade radius $r=1$ and $r=2$, the number of LY zeros are $\lfloor (n+1)/2 \rfloor$ and $\lfloor (n+2)/3 \rfloor$, respectively.
By numerically studying how the LY zeros evolve during the interpolation given by Eq. (\ref{eq:interpolation}), we can gain some insight on how the LY zeros of Rydberg blockade models with different blockade radii are connected with each other, and how the number of LY zeros changes as $r$ changes to $r+1$.

Among the $\lfloor (n+1)/2 \rfloor$ LY zeros for finite $\beta V$, there is a sharp difference between the distributions of the largest $\lfloor (n+2)/3\rfloor$ zeros and the others as $V$ goes to $\infty$: the former converge to that of the $r=2$ result, while the latter diverge to $-\infty$ exponentially. The convergence of the larger roots is a ramification of the fact that as $\beta V$ goes to $\infty$ we should recover the physics of the model with blockade radius $r=2$. Thus, we conclude that as $r$ increases from 1 to 2, the largest $\lfloor (n+2)/3 \rfloor$ zeros of the blockade model with $r=1$ evolve into each of the LY zeros of the blockade model with $r=2$. The change in the number of LY zeros of classical Rydberg blockade models with different blockade radii is attributed to the fact that the rest of the zeros in the interpolation process eventually diverge to $-\infty$. Numerically, we find that the dichotomy becomes apparent with moderate $\beta V\simeq5$. The results for $n=36$ are shown in Fig.~\ref{fig:numerics}.

Due to the unified structures among different blockade radii discussed above, we expect the interpolations for general $r$ to behave similarly to that between $r=1$ and $r=2$. Indeed, we numerically check the one between $r=2$ and $r=3$ with an interpolation by a repulsive next-next-nearest-neighbor interaction
\begin{equation}
H''=-\Delta\sum_i n_i+V\sum_i n_i n_{i+3},
\end{equation}
with the condition that no two atoms within a radius of two can be excited simultaneously. 
The corresponding partition function reads
\begin{equation}
Z''=\sum_n y^n \sum_{\{n_i\}}\exp(-\beta V\sum_i n_i n_{i+3}),\label{eq:r23}
\end{equation}
where $n_i$ now sum over configurations with the $r=2$ constraint. The largest $\lfloor (n+3)/4\rfloor$ zeros approach the $r=3$ results, and the others diverge to $-\infty$ as $V$ goes to $\infty$, see Fig.~\ref{fig:numerics2}.

\section{Experimental measurement of the LY zeros}
Amenability to individual manipulations and measurements empowers cold Rydberg atoms as an ideal platform to measure LY zeros experimentally. Here we propose a setup similar to those in \cite{Wei2012,Peng2015} where the zeros can be measured dynamically. 

\begin{figure}
    \centering
    \includegraphics[width=0.9\columnwidth]{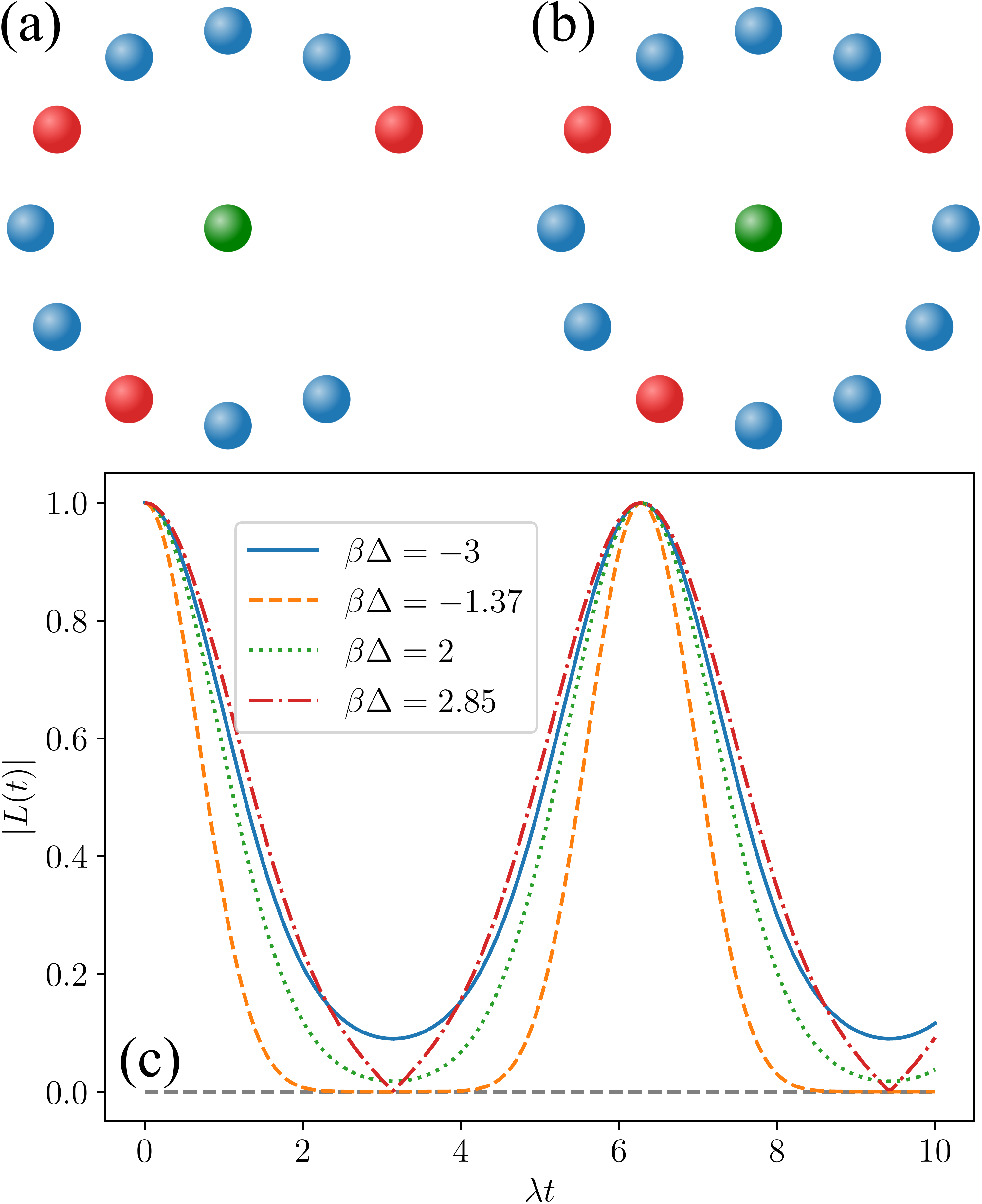}
    \caption{(a, b) The experimental setup for (a) open boundary condition and (b) periodic boundary condition. Blue (red) balls denote atoms in the ground (Rydberg) state, and the green ball denotes the probe atom. Typical time dependence of $|L(t)|$ as defined in Eq.~\eqref{eq:L} is shown in (c). Here we choose $n=24$, $r=1$. The difference between the zeros ($\beta\Delta=-1.37,2.85$) and non-zeros ($\beta\Delta=-3,2$) is clearly visible.}
    \label{fig:exp}
\end{figure}

We first assemble atoms (system, denoted by s) uniformly on a circle, either to a C-shaped arc for open boundary condition or an O-shaped circle for periodic boundary condition, with detuning $\Delta$ and inverse temperature $\beta$, see Fig.~\ref{fig:exp}(a, b). Then we put a probe (denoted by p) atom on the axis normal to the circle plane, not necessarily in the same plane as the circle to allow for more general couplings while maintaining a homogeneous one-to-all interaction,
\begin{equation}
H_\mathrm{sp}=\lambda n_\mathrm{p}\sum_i n_i=\lambda n_\mathrm{p}n_\mathrm{s},
\end{equation}
where $\lambda$ is a coupling constant and $n_\mathrm{p}$ denotes Rydberg number operator of the probe atom. By preparing the probe atom as a superposition of the ground state $|0\rangle$ and the Rydberg state $|1\rangle$, $(|0\rangle+|1\rangle)/\sqrt{2}$, it evolves with a phase factor proportional to the total Rydberg atom number, weighted by the density matrix of the system, giving rise to measurable quantities
\begin{equation}
\begin{split}
\langle\sigma^x_\mathrm{p}(t)\rangle
&=Z(y)^{-1}\sum_{n_\mathrm{s}} y^{n_\mathrm{s}}\sum_{\{n_i\}}e^{-\beta H_\mathrm{s}}\cos(\lambda n_\mathrm{s} t),\\
\langle\sigma^y_\mathrm{p}(t)\rangle
&=Z(y)^{-1}\sum_{n_\mathrm{s}} y^{n_\mathrm{s}}\sum_{\{n_i\}}e^{-\beta H_\mathrm{s}}\sin(\lambda n_\mathrm{s} t).
\end{split}
\end{equation}
Here, $\sigma^x_\mathrm{p}=|1\rangle\langle0|+|0\rangle\langle1|$ and $\sigma^y_\mathrm{p} =i(|0\rangle\langle1| -|1\rangle\langle0|)$ can be measured by first performing proper rotations to change to the $x,y$ basis. Combining the two results gives
\begin{equation}
L(t)=\langle\sigma^x_\mathrm{p}(t)+i\sigma^y_\mathrm{p}(t)\rangle=Z(e^{\beta\Delta+i\lambda t})/Z(e^{\beta\Delta}).\label{eq:L}
\end{equation}
By tuning $\Delta$ and $t$, one can identify when $L(t)$ goes to zero, which in turn gives the zeros of the partition function. In particular, the results we show in the previous sections indicate that $\lambda t=(2n+1)\pi, n\in\mathbb{N}$, see Fig.~\ref{fig:exp}(c).

\section{Conclusions}
Motivated by the recent advances in cold Rydberg atom experiments, we study LY zeros in one-dimensional classical Rydberg blockade systems. We find that the distribution of LY zeros of such models can be obtained analytically. We rigorously prove that for general blockade radii the LY zeros are real and negative. Thus, there are no phase transitions at finite temperature. As we have shown in this paper, the 1d Rydberg chain is one of a small number of known models, whose distribution of LY zeros can be obtained analytically. Thus, its mathematical structures may merit further examination. In addition, we numerically find that these zeros keep living in the negative real axis, when one turns on a next-nearest-neighbor repulsive interaction to interpolate between the nearest- and next-nearest-neighbor blockades. We propose a way to experimentally observe these zeros by coupling the system to a probe atom and performing dynamical measurements thereon.

\section*{Acknowledgments}
We thank Hui Zhai for helpful discussions. 
Both authors are supported by the International Postdoctoral Exchange Fellowship Program and the Shuimu Tsinghua Scholar Program.

\bibliography{biblio}

\begin{thebibliography}{51}%
\makeatletter
\providecommand \@ifxundefined [1]{%
 \@ifx{#1\undefined}
}%
\providecommand \@ifnum [1]{%
 \ifnum #1\expandafter \@firstoftwo
 \else \expandafter \@secondoftwo
 \fi
}%
\providecommand \@ifx [1]{%
 \ifx #1\expandafter \@firstoftwo
 \else \expandafter \@secondoftwo
 \fi
}%
\providecommand \natexlab [1]{#1}%
\providecommand \enquote  [1]{``#1''}%
\providecommand \bibnamefont  [1]{#1}%
\providecommand \bibfnamefont [1]{#1}%
\providecommand \citenamefont [1]{#1}%
\providecommand \href@noop [0]{\@secondoftwo}%
\providecommand \href [0]{\begingroup \@sanitize@url \@href}%
\providecommand \@href[1]{\@@startlink{#1}\@@href}%
\providecommand \@@href[1]{\endgroup#1\@@endlink}%
\providecommand \@sanitize@url [0]{\catcode `\\12\catcode `\$12\catcode
  `\&12\catcode `\#12\catcode `\^12\catcode `\_12\catcode `\%12\relax}%
\providecommand \@@startlink[1]{}%
\providecommand \@@endlink[0]{}%
\providecommand \url  [0]{\begingroup\@sanitize@url \@url }%
\providecommand \@url [1]{\endgroup\@href {#1}{\urlprefix }}%
\providecommand \urlprefix  [0]{URL }%
\providecommand \Eprint [0]{\href }%
\providecommand \doibase [0]{https://doi.org/}%
\providecommand \selectlanguage [0]{\@gobble}%
\providecommand \bibinfo  [0]{\@secondoftwo}%
\providecommand \bibfield  [0]{\@secondoftwo}%
\providecommand \translation [1]{[#1]}%
\providecommand \BibitemOpen [0]{}%
\providecommand \bibitemStop [0]{}%
\providecommand \bibitemNoStop [0]{.\EOS\space}%
\providecommand \EOS [0]{\spacefactor3000\relax}%
\providecommand \BibitemShut  [1]{\csname bibitem#1\endcsname}%
\let\auto@bib@innerbib\@empty
\bibitem [{\citenamefont {Yang}\ and\ \citenamefont {Lee}(1952)}]{Yang1952}%
  \BibitemOpen
  \bibfield  {author} {\bibinfo {author} {\bibfnamefont {C.~N.}\ \bibnamefont
  {Yang}}\ and\ \bibinfo {author} {\bibfnamefont {T.~D.}\ \bibnamefont {Lee}},\
  }\bibfield  {title} {\bibinfo {title} {Statistical theory of equations of
  state and phase transitions. {I}. {T}heory of condensation},\ }\href
  {https://doi.org/10.1103/PhysRev.87.404} {\bibfield  {journal} {\bibinfo
  {journal} {Phys. Rev.}\ }\textbf {\bibinfo {volume} {87}},\ \bibinfo {pages}
  {404} (\bibinfo {year} {1952})}\BibitemShut {NoStop}%
\bibitem [{\citenamefont {Lee}\ and\ \citenamefont {Yang}(1952)}]{Lee1952}%
  \BibitemOpen
  \bibfield  {author} {\bibinfo {author} {\bibfnamefont {T.~D.}\ \bibnamefont
  {Lee}}\ and\ \bibinfo {author} {\bibfnamefont {C.~N.}\ \bibnamefont {Yang}},\
  }\bibfield  {title} {\bibinfo {title} {Statistical theory of equations of
  state and phase transitions. {II}. {L}attice gas and {I}sing model},\ }\href
  {https://doi.org/10.1103/PhysRev.87.410} {\bibfield  {journal} {\bibinfo
  {journal} {Phys. Rev.}\ }\textbf {\bibinfo {volume} {87}},\ \bibinfo {pages}
  {410} (\bibinfo {year} {1952})}\BibitemShut {NoStop}%
\bibitem [{\citenamefont {Asano}(1968)}]{Asano1968}%
  \BibitemOpen
  \bibfield  {author} {\bibinfo {author} {\bibfnamefont {T.}~\bibnamefont
  {Asano}},\ }\bibfield  {title} {\bibinfo {title} {Generalization of the
  {L}ee-{Y}ang theorem},\ }\href {https://doi.org/10.1143/PTP.40.1328}
  {\bibfield  {journal} {\bibinfo  {journal} {Prog. Theor. Phys.}\ }\textbf
  {\bibinfo {volume} {40}},\ \bibinfo {pages} {1328} (\bibinfo {year}
  {1968})}\BibitemShut {NoStop}%
\bibitem [{\citenamefont {Suzuki}(1968{\natexlab{a}})}]{Suzuki1968a}%
  \BibitemOpen
  \bibfield  {author} {\bibinfo {author} {\bibfnamefont {M.}~\bibnamefont
  {Suzuki}},\ }\bibfield  {title} {\bibinfo {title} {Theorems on the {I}sing
  model with general spin and phase transition},\ }\href
  {https://doi.org/10.1063/1.1664546} {\bibfield  {journal} {\bibinfo
  {journal} {J. Math. Phys.}\ }\textbf {\bibinfo {volume} {9}},\ \bibinfo
  {pages} {2064} (\bibinfo {year} {1968}{\natexlab{a}})}\BibitemShut {NoStop}%
\bibitem [{\citenamefont {Suzuki}(1968{\natexlab{b}})}]{Suzuki1968b}%
  \BibitemOpen
  \bibfield  {author} {\bibinfo {author} {\bibfnamefont {M.}~\bibnamefont
  {Suzuki}},\ }\bibfield  {title} {\bibinfo {title} {Theorems on extended
  {I}sing model with applications to dilute ferromagnetism},\ }\href
  {https://doi.org/10.1143/PTP.40.1246} {\bibfield  {journal} {\bibinfo
  {journal} {Prog. Theor. Phys.}\ }\textbf {\bibinfo {volume} {40}},\ \bibinfo
  {pages} {1246} (\bibinfo {year} {1968}{\natexlab{b}})}\BibitemShut {NoStop}%
\bibitem [{\citenamefont {Griffiths}(1969)}]{Griffiths1969}%
  \BibitemOpen
  \bibfield  {author} {\bibinfo {author} {\bibfnamefont {R.~B.}\ \bibnamefont
  {Griffiths}},\ }\bibfield  {title} {\bibinfo {title} {Rigorous results for
  {I}sing ferromagnets of arbitrary spin},\ }\href
  {https://doi.org/10.1063/1.1665005} {\bibfield  {journal} {\bibinfo
  {journal} {J. Math. Phys.}\ }\textbf {\bibinfo {volume} {10}},\ \bibinfo
  {pages} {1559} (\bibinfo {year} {1969})}\BibitemShut {NoStop}%
\bibitem [{\citenamefont {Asano}(1970)}]{Asano1970}%
  \BibitemOpen
  \bibfield  {author} {\bibinfo {author} {\bibfnamefont {T.}~\bibnamefont
  {Asano}},\ }\bibfield  {title} {\bibinfo {title} {Theorems on the partition
  functions of the {H}eisenberg ferromagnets},\ }\href
  {https://doi.org/10.1143/JPSJ.29.350} {\bibfield  {journal} {\bibinfo
  {journal} {J. Phys. Soc. Jpn.}\ }\textbf {\bibinfo {volume} {29}},\ \bibinfo
  {pages} {350} (\bibinfo {year} {1970})}\BibitemShut {NoStop}%
\bibitem [{\citenamefont {Ruelle}(1971)}]{Ruelle1971}%
  \BibitemOpen
  \bibfield  {author} {\bibinfo {author} {\bibfnamefont {D.}~\bibnamefont
  {Ruelle}},\ }\bibfield  {title} {\bibinfo {title} {Extension of the
  {L}ee-{Y}ang circle theorem},\ }\href
  {https://doi.org/10.1103/PhysRevLett.26.303} {\bibfield  {journal} {\bibinfo
  {journal} {Phys. Rev. Lett.}\ }\textbf {\bibinfo {volume} {26}},\ \bibinfo
  {pages} {303} (\bibinfo {year} {1971})}\BibitemShut {NoStop}%
\bibitem [{\citenamefont {Suzuki}\ and\ \citenamefont
  {Fisher}(1971)}]{Suzuki1971}%
  \BibitemOpen
  \bibfield  {author} {\bibinfo {author} {\bibfnamefont {M.}~\bibnamefont
  {Suzuki}}\ and\ \bibinfo {author} {\bibfnamefont {M.~E.}\ \bibnamefont
  {Fisher}},\ }\bibfield  {title} {\bibinfo {title} {Zeros of the partition
  function for the {H}eisenberg, ferroelectric, and general {I}sing models},\
  }\href {https://doi.org/10.1063/1.1665583} {\bibfield  {journal} {\bibinfo
  {journal} {J. Math. Phys.}\ }\textbf {\bibinfo {volume} {12}},\ \bibinfo
  {pages} {235} (\bibinfo {year} {1971})}\BibitemShut {NoStop}%
\bibitem [{\citenamefont {Kurtze}\ and\ \citenamefont
  {Fisher}(1978)}]{Kurtze1978}%
  \BibitemOpen
  \bibfield  {author} {\bibinfo {author} {\bibfnamefont {D.~A.}\ \bibnamefont
  {Kurtze}}\ and\ \bibinfo {author} {\bibfnamefont {M.~E.}\ \bibnamefont
  {Fisher}},\ }\bibfield  {title} {\bibinfo {title} {The {Y}ang-{L}ee edge
  singularity in spherical models},\ }\href
  {https://doi.org/10.1007/BF01011723} {\bibfield  {journal} {\bibinfo
  {journal} {J. Stat. Phys.}\ }\textbf {\bibinfo {volume} {19}},\ \bibinfo
  {pages} {205} (\bibinfo {year} {1978})}\BibitemShut {NoStop}%
\bibitem [{\citenamefont {Lieb}\ and\ \citenamefont {Ruelle}(1972)}]{Lieb1972}%
  \BibitemOpen
  \bibfield  {author} {\bibinfo {author} {\bibfnamefont {E.~H.}\ \bibnamefont
  {Lieb}}\ and\ \bibinfo {author} {\bibfnamefont {D.}~\bibnamefont {Ruelle}},\
  }\bibfield  {title} {\bibinfo {title} {A property of zeros of the partition
  function for {I}sing spin systems},\ }\href
  {https://doi.org/10.1063/1.1666051} {\bibfield  {journal} {\bibinfo
  {journal} {J. Math. Phys.}\ }\textbf {\bibinfo {volume} {13}},\ \bibinfo
  {pages} {781} (\bibinfo {year} {1972})}\BibitemShut {NoStop}%
\bibitem [{\citenamefont {Heilmann}\ and\ \citenamefont
  {Lieb}(1972)}]{Heilmann1972}%
  \BibitemOpen
  \bibfield  {author} {\bibinfo {author} {\bibfnamefont {O.~J.}\ \bibnamefont
  {Heilmann}}\ and\ \bibinfo {author} {\bibfnamefont {E.~H.}\ \bibnamefont
  {Lieb}},\ }\bibfield  {title} {\bibinfo {title} {Theory of monomer-dimer
  systems},\ }\href {https://doi.org/10.1007/BF01877590} {\bibfield  {journal}
  {\bibinfo  {journal} {Commun. Math. Phys.}\ }\textbf {\bibinfo {volume}
  {25}},\ \bibinfo {pages} {190} (\bibinfo {year} {1972})}\BibitemShut
  {NoStop}%
\bibitem [{\citenamefont {Dobrushin}\ \emph {et~al.}(1985)\citenamefont
  {Dobrushin}, \citenamefont {Kolafa},\ and\ \citenamefont
  {Shlosman}}]{Dobrushin1985}%
  \BibitemOpen
  \bibfield  {author} {\bibinfo {author} {\bibfnamefont {R.~L.}\ \bibnamefont
  {Dobrushin}}, \bibinfo {author} {\bibfnamefont {J.}~\bibnamefont {Kolafa}},\
  and\ \bibinfo {author} {\bibfnamefont {S.~B.}\ \bibnamefont {Shlosman}},\
  }\bibfield  {title} {\bibinfo {title} {Phase diagram of the two-dimensional
  {I}sing antiferromagnet (computer-assisted proof)},\ }\href
  {https://doi.org/10.1007/BF01208821} {\bibfield  {journal} {\bibinfo
  {journal} {Commun. Math. Phys.}\ }\textbf {\bibinfo {volume} {102}},\
  \bibinfo {pages} {89} (\bibinfo {year} {1985})}\BibitemShut {NoStop}%
\bibitem [{\citenamefont {Beauzamy}(1996)}]{Beauzamy1996}%
  \BibitemOpen
  \bibfield  {author} {\bibinfo {author} {\bibfnamefont {B.}~\bibnamefont
  {Beauzamy}},\ }\bibfield  {title} {\bibinfo {title} {On complex {L}ee and
  {Y}ang polynomials},\ }\href {https://doi.org/10.1007/BF02506389} {\bibfield
  {journal} {\bibinfo  {journal} {Commun. Math. Phys.}\ }\textbf {\bibinfo
  {volume} {182}},\ \bibinfo {pages} {177} (\bibinfo {year}
  {1996})}\BibitemShut {NoStop}%
\bibitem [{\citenamefont {Kim}(2004)}]{Kim2004}%
  \BibitemOpen
  \bibfield  {author} {\bibinfo {author} {\bibfnamefont {S.-Y.}\ \bibnamefont
  {Kim}},\ }\bibfield  {title} {\bibinfo {title} {Yang-{L}ee zeros of the
  antiferromagnetic {I}sing model},\ }\href
  {https://doi.org/10.1103/PhysRevLett.93.130604} {\bibfield  {journal}
  {\bibinfo  {journal} {Phys. Rev. Lett.}\ }\textbf {\bibinfo {volume} {93}},\
  \bibinfo {pages} {130604} (\bibinfo {year} {2004})}\BibitemShut {NoStop}%
\bibitem [{\citenamefont {Hwang}\ and\ \citenamefont {Kim}(2010)}]{Hwang2010}%
  \BibitemOpen
  \bibfield  {author} {\bibinfo {author} {\bibfnamefont {C.-O.}\ \bibnamefont
  {Hwang}}\ and\ \bibinfo {author} {\bibfnamefont {S.-Y.}\ \bibnamefont
  {Kim}},\ }\bibfield  {title} {\bibinfo {title} {Yang–{L}ee zeros of
  triangular {I}sing antiferromagnets},\ }\href
  {https://doi.org/10.1016/j.physa.2010.08.050} {\bibfield  {journal} {\bibinfo
   {journal} {Phys. A: Stat. Mech. Appl.}\ }\textbf {\bibinfo {volume} {389}},\
  \bibinfo {pages} {5650} (\bibinfo {year} {2010})}\BibitemShut {NoStop}%
\bibitem [{\citenamefont {Lebowitz}\ \emph {et~al.}(2012)\citenamefont
  {Lebowitz}, \citenamefont {Ruelle},\ and\ \citenamefont
  {Speer}}]{Lebowitz2012}%
  \BibitemOpen
  \bibfield  {author} {\bibinfo {author} {\bibfnamefont {J.~L.}\ \bibnamefont
  {Lebowitz}}, \bibinfo {author} {\bibfnamefont {D.}~\bibnamefont {Ruelle}},\
  and\ \bibinfo {author} {\bibfnamefont {E.~R.}\ \bibnamefont {Speer}},\
  }\bibfield  {title} {\bibinfo {title} {Location of the {L}ee-{Y}ang zeros and
  absence of phase transitions in some {I}sing spin systems},\ }\href
  {https://doi.org/10.1063/1.4738622} {\bibfield  {journal} {\bibinfo
  {journal} {J. Math. Phys.}\ }\textbf {\bibinfo {volume} {53}},\ \bibinfo
  {pages} {095211} (\bibinfo {year} {2012})}\BibitemShut {NoStop}%
\bibitem [{\citenamefont {Lebowitz}\ and\ \citenamefont
  {Scaramazza}(2016)}]{Lebowitz2016}%
  \BibitemOpen
  \bibfield  {author} {\bibinfo {author} {\bibfnamefont {J.~L.}\ \bibnamefont
  {Lebowitz}}\ and\ \bibinfo {author} {\bibfnamefont {J.~A.}\ \bibnamefont
  {Scaramazza}},\ }\bibfield  {title} {\bibinfo {title} {A note on
  {L}ee{\textendash}{Y}ang zeros in the negative half-plane},\ }\href
  {https://doi.org/10.1088/0953-8984/28/41/414004} {\bibfield  {journal}
  {\bibinfo  {journal} {J. Phys.: Condens. Matter}\ }\textbf {\bibinfo {volume}
  {28}},\ \bibinfo {pages} {414004} (\bibinfo {year} {2016})}\BibitemShut
  {NoStop}%
\bibitem [{\citenamefont {Heyl}\ \emph {et~al.}(2013)\citenamefont {Heyl},
  \citenamefont {Polkovnikov},\ and\ \citenamefont {Kehrein}}]{Heyl2013}%
  \BibitemOpen
  \bibfield  {author} {\bibinfo {author} {\bibfnamefont {M.}~\bibnamefont
  {Heyl}}, \bibinfo {author} {\bibfnamefont {A.}~\bibnamefont {Polkovnikov}},\
  and\ \bibinfo {author} {\bibfnamefont {S.}~\bibnamefont {Kehrein}},\
  }\bibfield  {title} {\bibinfo {title} {Dynamical quantum phase transitions in
  the transverse-field {I}sing model},\ }\href
  {https://doi.org/10.1103/PhysRevLett.110.135704} {\bibfield  {journal}
  {\bibinfo  {journal} {Phys. Rev. Lett.}\ }\textbf {\bibinfo {volume} {110}},\
  \bibinfo {pages} {135704} (\bibinfo {year} {2013})}\BibitemShut {NoStop}%
\bibitem [{\citenamefont {Brandner}\ \emph {et~al.}(2017)\citenamefont
  {Brandner}, \citenamefont {Maisi}, \citenamefont {Pekola}, \citenamefont
  {Garrahan},\ and\ \citenamefont {Flindt}}]{Brandner2017}%
  \BibitemOpen
  \bibfield  {author} {\bibinfo {author} {\bibfnamefont {K.}~\bibnamefont
  {Brandner}}, \bibinfo {author} {\bibfnamefont {V.~F.}\ \bibnamefont {Maisi}},
  \bibinfo {author} {\bibfnamefont {J.~P.}\ \bibnamefont {Pekola}}, \bibinfo
  {author} {\bibfnamefont {J.~P.}\ \bibnamefont {Garrahan}},\ and\ \bibinfo
  {author} {\bibfnamefont {C.}~\bibnamefont {Flindt}},\ }\bibfield  {title}
  {\bibinfo {title} {Experimental determination of dynamical {L}ee-{Y}ang
  zeros},\ }\href {https://doi.org/10.1103/PhysRevLett.118.180601} {\bibfield
  {journal} {\bibinfo  {journal} {Phys. Rev. Lett.}\ }\textbf {\bibinfo
  {volume} {118}},\ \bibinfo {pages} {180601} (\bibinfo {year}
  {2017})}\BibitemShut {NoStop}%
\bibitem [{\citenamefont {Deger}\ and\ \citenamefont
  {Flindt}(2019)}]{Deger2019}%
  \BibitemOpen
  \bibfield  {author} {\bibinfo {author} {\bibfnamefont {A.}~\bibnamefont
  {Deger}}\ and\ \bibinfo {author} {\bibfnamefont {C.}~\bibnamefont {Flindt}},\
  }\bibfield  {title} {\bibinfo {title} {Determination of universal critical
  exponents using {L}ee-{Y}ang theory},\ }\href
  {https://doi.org/10.1103/PhysRevResearch.1.023004} {\bibfield  {journal}
  {\bibinfo  {journal} {Phys. Rev. Research}\ }\textbf {\bibinfo {volume}
  {1}},\ \bibinfo {pages} {023004} (\bibinfo {year} {2019})}\BibitemShut
  {NoStop}%
\bibitem [{\citenamefont {Deger}\ \emph {et~al.}(2020)\citenamefont {Deger},
  \citenamefont {Brange},\ and\ \citenamefont {Flindt}}]{Deger2020}%
  \BibitemOpen
  \bibfield  {author} {\bibinfo {author} {\bibfnamefont {A.}~\bibnamefont
  {Deger}}, \bibinfo {author} {\bibfnamefont {F.}~\bibnamefont {Brange}},\ and\
  \bibinfo {author} {\bibfnamefont {C.}~\bibnamefont {Flindt}},\ }\bibfield
  {title} {\bibinfo {title} {Lee-{Y}ang theory, high cumulants, and
  large-deviation statistics of the magnetization in the {I}sing model},\
  }\href {https://doi.org/10.1103/PhysRevB.102.174418} {\bibfield  {journal}
  {\bibinfo  {journal} {Phys. Rev. B}\ }\textbf {\bibinfo {volume} {102}},\
  \bibinfo {pages} {174418} (\bibinfo {year} {2020})}\BibitemShut {NoStop}%
\bibitem [{\citenamefont {Kist}\ \emph {et~al.}(2021)\citenamefont {Kist},
  \citenamefont {Lado},\ and\ \citenamefont {Flindt}}]{Kist2021}%
  \BibitemOpen
  \bibfield  {author} {\bibinfo {author} {\bibfnamefont {T.}~\bibnamefont
  {Kist}}, \bibinfo {author} {\bibfnamefont {J.~L.}\ \bibnamefont {Lado}},\
  and\ \bibinfo {author} {\bibfnamefont {C.}~\bibnamefont {Flindt}},\
  }\bibfield  {title} {\bibinfo {title} {Lee-{Y}ang theory of criticality in
  interacting quantum many-body systems},\ }\href
  {https://doi.org/10.1103/PhysRevResearch.3.033206} {\bibfield  {journal}
  {\bibinfo  {journal} {Phys. Rev. Research}\ }\textbf {\bibinfo {volume}
  {3}},\ \bibinfo {pages} {033206} (\bibinfo {year} {2021})}\BibitemShut
  {NoStop}%
\bibitem [{\citenamefont {Kurtz}(1992)}]{Kurtz1992}%
  \BibitemOpen
  \bibfield  {author} {\bibinfo {author} {\bibfnamefont {D.~C.}\ \bibnamefont
  {Kurtz}},\ }\bibfield  {title} {\bibinfo {title} {A sufficient condition for
  all the roots of a polynomial to be real},\ }\href
  {https://doi.org/10.2307/2325063} {\bibfield  {journal} {\bibinfo  {journal}
  {Am. Math. Mon.}\ }\textbf {\bibinfo {volume} {99}},\ \bibinfo {pages} {259}
  (\bibinfo {year} {1992})}\BibitemShut {NoStop}%
\bibitem [{\citenamefont {Borcea}\ and\ \citenamefont
  {Brändén}(2009{\natexlab{a}})}]{Borcea2009a}%
  \BibitemOpen
  \bibfield  {author} {\bibinfo {author} {\bibfnamefont {J.}~\bibnamefont
  {Borcea}}\ and\ \bibinfo {author} {\bibfnamefont {P.}~\bibnamefont
  {Brändén}},\ }\bibfield  {title} {\bibinfo {title} {The {L}ee-{Y}ang and
  {P}olya-{S}chur programs. {I}. {L}inear operators preserving stability},\
  }\href {https://doi.org/10.1007/s00222-009-0189-3} {\bibfield  {journal}
  {\bibinfo  {journal} {Invent. Math.}\ }\textbf {\bibinfo {volume} {177}},\
  \bibinfo {pages} {541} (\bibinfo {year} {2009}{\natexlab{a}})}\BibitemShut
  {NoStop}%
\bibitem [{\citenamefont {Borcea}\ and\ \citenamefont
  {Brändén}(2009{\natexlab{b}})}]{Borcea2009b}%
  \BibitemOpen
  \bibfield  {author} {\bibinfo {author} {\bibfnamefont {J.}~\bibnamefont
  {Borcea}}\ and\ \bibinfo {author} {\bibfnamefont {P.}~\bibnamefont
  {Brändén}},\ }\bibfield  {title} {\bibinfo {title} {The {L}ee-{Y}ang and
  {P}olya-{S}chur programs. {II}. {T}heory of stable polynomials and
  applications},\ }\href {https://doi.org/10.1002/cpa.20295} {\bibfield
  {journal} {\bibinfo  {journal} {Commun. Pure Appl. Math.}\ }\textbf {\bibinfo
  {volume} {62}},\ \bibinfo {pages} {1595} (\bibinfo {year}
  {2009}{\natexlab{b}})}\BibitemShut {NoStop}%
\bibitem [{\citenamefont {Ruelle}(2010)}]{Ruelle2010}%
  \BibitemOpen
  \bibfield  {author} {\bibinfo {author} {\bibfnamefont {D.}~\bibnamefont
  {Ruelle}},\ }\bibfield  {title} {\bibinfo {title} {Characterization of
  {Lee}-{Yang} polynomials},\ }\href
  {https://doi.org/10.4007/annals.2010.171.589} {\bibfield  {journal} {\bibinfo
   {journal} {Ann. Math.}\ }\textbf {\bibinfo {volume} {171}},\ \bibinfo
  {pages} {589} (\bibinfo {year} {2010})}\BibitemShut {NoStop}%
\bibitem [{\citenamefont {Wei}\ and\ \citenamefont {Liu}(2012)}]{Wei2012}%
  \BibitemOpen
  \bibfield  {author} {\bibinfo {author} {\bibfnamefont {B.-B.}\ \bibnamefont
  {Wei}}\ and\ \bibinfo {author} {\bibfnamefont {R.-B.}\ \bibnamefont {Liu}},\
  }\bibfield  {title} {\bibinfo {title} {Lee-{Y}ang zeros and critical times in
  decoherence of a probe spin coupled to a bath},\ }\href
  {https://doi.org/10.1103/PhysRevLett.109.185701} {\bibfield  {journal}
  {\bibinfo  {journal} {Phys. Rev. Lett.}\ }\textbf {\bibinfo {volume} {109}},\
  \bibinfo {pages} {185701} (\bibinfo {year} {2012})}\BibitemShut {NoStop}%
\bibitem [{\citenamefont {Peng}\ \emph {et~al.}(2015)\citenamefont {Peng},
  \citenamefont {Zhou}, \citenamefont {Wei}, \citenamefont {Cui}, \citenamefont
  {Du},\ and\ \citenamefont {Liu}}]{Peng2015}%
  \BibitemOpen
  \bibfield  {author} {\bibinfo {author} {\bibfnamefont {X.}~\bibnamefont
  {Peng}}, \bibinfo {author} {\bibfnamefont {H.}~\bibnamefont {Zhou}}, \bibinfo
  {author} {\bibfnamefont {B.-B.}\ \bibnamefont {Wei}}, \bibinfo {author}
  {\bibfnamefont {J.}~\bibnamefont {Cui}}, \bibinfo {author} {\bibfnamefont
  {J.}~\bibnamefont {Du}},\ and\ \bibinfo {author} {\bibfnamefont {R.-B.}\
  \bibnamefont {Liu}},\ }\bibfield  {title} {\bibinfo {title} {Experimental
  observation of {L}ee-{Y}ang zeros},\ }\href
  {https://doi.org/10.1103/PhysRevLett.114.010601} {\bibfield  {journal}
  {\bibinfo  {journal} {Phys. Rev. Lett.}\ }\textbf {\bibinfo {volume} {114}},\
  \bibinfo {pages} {010601} (\bibinfo {year} {2015})}\BibitemShut {NoStop}%
\bibitem [{\citenamefont {Bernien}\ \emph {et~al.}(2017)\citenamefont
  {Bernien}, \citenamefont {Schwartz}, \citenamefont {Keesling}, \citenamefont
  {Levine}, \citenamefont {Omran}, \citenamefont {Pichler}, \citenamefont
  {Choi}, \citenamefont {Zibrov}, \citenamefont {Endres}, \citenamefont
  {Greiner}, \citenamefont {Vuleti{\'{c}}},\ and\ \citenamefont
  {Lukin}}]{Bernien2017}%
  \BibitemOpen
  \bibfield  {author} {\bibinfo {author} {\bibfnamefont {H.}~\bibnamefont
  {Bernien}}, \bibinfo {author} {\bibfnamefont {S.}~\bibnamefont {Schwartz}},
  \bibinfo {author} {\bibfnamefont {A.}~\bibnamefont {Keesling}}, \bibinfo
  {author} {\bibfnamefont {H.}~\bibnamefont {Levine}}, \bibinfo {author}
  {\bibfnamefont {A.}~\bibnamefont {Omran}}, \bibinfo {author} {\bibfnamefont
  {H.}~\bibnamefont {Pichler}}, \bibinfo {author} {\bibfnamefont
  {S.}~\bibnamefont {Choi}}, \bibinfo {author} {\bibfnamefont {A.~S.}\
  \bibnamefont {Zibrov}}, \bibinfo {author} {\bibfnamefont {M.}~\bibnamefont
  {Endres}}, \bibinfo {author} {\bibfnamefont {M.}~\bibnamefont {Greiner}},
  \bibinfo {author} {\bibfnamefont {V.}~\bibnamefont {Vuleti{\'{c}}}},\ and\
  \bibinfo {author} {\bibfnamefont {M.~D.}\ \bibnamefont {Lukin}},\ }\bibfield
  {title} {\bibinfo {title} {Probing many-body dynamics on a 51-atom quantum
  simulator},\ }\href {https://doi.org/10.1038/nature24622} {\bibfield
  {journal} {\bibinfo  {journal} {Nature}\ }\textbf {\bibinfo {volume} {551}},\
  \bibinfo {pages} {579} (\bibinfo {year} {2017})}\BibitemShut {NoStop}%
\bibitem [{\citenamefont {Keesling}\ \emph {et~al.}(2019)\citenamefont
  {Keesling}, \citenamefont {Omran}, \citenamefont {Levine}, \citenamefont
  {Bernien}, \citenamefont {Pichler}, \citenamefont {Choi}, \citenamefont
  {Samajdar}, \citenamefont {Schwartz}, \citenamefont {Silvi}, \citenamefont
  {Sachdev}, \citenamefont {Zoller}, \citenamefont {Endres}, \citenamefont
  {Greiner}, \citenamefont {Vuleti{\'{c}}},\ and\ \citenamefont
  {Lukin}}]{Keesling2019}%
  \BibitemOpen
  \bibfield  {author} {\bibinfo {author} {\bibfnamefont {A.}~\bibnamefont
  {Keesling}}, \bibinfo {author} {\bibfnamefont {A.}~\bibnamefont {Omran}},
  \bibinfo {author} {\bibfnamefont {H.}~\bibnamefont {Levine}}, \bibinfo
  {author} {\bibfnamefont {H.}~\bibnamefont {Bernien}}, \bibinfo {author}
  {\bibfnamefont {H.}~\bibnamefont {Pichler}}, \bibinfo {author} {\bibfnamefont
  {S.}~\bibnamefont {Choi}}, \bibinfo {author} {\bibfnamefont {R.}~\bibnamefont
  {Samajdar}}, \bibinfo {author} {\bibfnamefont {S.}~\bibnamefont {Schwartz}},
  \bibinfo {author} {\bibfnamefont {P.}~\bibnamefont {Silvi}}, \bibinfo
  {author} {\bibfnamefont {S.}~\bibnamefont {Sachdev}}, \bibinfo {author}
  {\bibfnamefont {P.}~\bibnamefont {Zoller}}, \bibinfo {author} {\bibfnamefont
  {M.}~\bibnamefont {Endres}}, \bibinfo {author} {\bibfnamefont
  {M.}~\bibnamefont {Greiner}}, \bibinfo {author} {\bibfnamefont
  {V.}~\bibnamefont {Vuleti{\'{c}}}},\ and\ \bibinfo {author} {\bibfnamefont
  {M.~D.}\ \bibnamefont {Lukin}},\ }\bibfield  {title} {\bibinfo {title}
  {Quantum {K}ibble--{Z}urek mechanism and critical dynamics on a programmable
  {R}ydberg simulator},\ }\href {https://doi.org/10.1038/s41586-019-1070-1}
  {\bibfield  {journal} {\bibinfo  {journal} {Nature}\ }\textbf {\bibinfo
  {volume} {568}},\ \bibinfo {pages} {207} (\bibinfo {year}
  {2019})}\BibitemShut {NoStop}%
\bibitem [{\citenamefont {Satzinger}\ \emph {et~al.}(2021)\citenamefont
  {Satzinger} \emph {et~al.}}]{Satzinger2021}%
  \BibitemOpen
  \bibfield  {author} {\bibinfo {author} {\bibfnamefont {K.~J.}\ \bibnamefont
  {Satzinger}} \emph {et~al.},\ }\bibfield  {title} {\bibinfo {title}
  {Realizing topologically ordered states on a quantum processor},\ }\href
  {https://doi.org/10.1126/science.abi8378} {\bibfield  {journal} {\bibinfo
  {journal} {Science}\ }\textbf {\bibinfo {volume} {374}},\ \bibinfo {pages}
  {1237} (\bibinfo {year} {2021})}\BibitemShut {NoStop}%
\bibitem [{\citenamefont {Ebadi}\ \emph {et~al.}(2021)\citenamefont {Ebadi},
  \citenamefont {Wang}, \citenamefont {Levine}, \citenamefont {Keesling},
  \citenamefont {Semeghini}, \citenamefont {Omran}, \citenamefont {Bluvstein},
  \citenamefont {Samajdar}, \citenamefont {Pichler}, \citenamefont {Ho},
  \citenamefont {Choi}, \citenamefont {Sachdev}, \citenamefont {Greiner},
  \citenamefont {Vuleti{\'{c}}},\ and\ \citenamefont {Lukin}}]{Ebadi2021}%
  \BibitemOpen
  \bibfield  {author} {\bibinfo {author} {\bibfnamefont {S.}~\bibnamefont
  {Ebadi}}, \bibinfo {author} {\bibfnamefont {T.~T.}\ \bibnamefont {Wang}},
  \bibinfo {author} {\bibfnamefont {H.}~\bibnamefont {Levine}}, \bibinfo
  {author} {\bibfnamefont {A.}~\bibnamefont {Keesling}}, \bibinfo {author}
  {\bibfnamefont {G.}~\bibnamefont {Semeghini}}, \bibinfo {author}
  {\bibfnamefont {A.}~\bibnamefont {Omran}}, \bibinfo {author} {\bibfnamefont
  {D.}~\bibnamefont {Bluvstein}}, \bibinfo {author} {\bibfnamefont
  {R.}~\bibnamefont {Samajdar}}, \bibinfo {author} {\bibfnamefont
  {H.}~\bibnamefont {Pichler}}, \bibinfo {author} {\bibfnamefont {W.~W.}\
  \bibnamefont {Ho}}, \bibinfo {author} {\bibfnamefont {S.}~\bibnamefont
  {Choi}}, \bibinfo {author} {\bibfnamefont {S.}~\bibnamefont {Sachdev}},
  \bibinfo {author} {\bibfnamefont {M.}~\bibnamefont {Greiner}}, \bibinfo
  {author} {\bibfnamefont {V.}~\bibnamefont {Vuleti{\'{c}}}},\ and\ \bibinfo
  {author} {\bibfnamefont {M.~D.}\ \bibnamefont {Lukin}},\ }\bibfield  {title}
  {\bibinfo {title} {Quantum phases of matter on a 256-atom programmable
  quantum simulator},\ }\href {https://doi.org/10.1038/s41586-021-03582-4}
  {\bibfield  {journal} {\bibinfo  {journal} {Nature}\ }\textbf {\bibinfo
  {volume} {595}},\ \bibinfo {pages} {227} (\bibinfo {year}
  {2021})}\BibitemShut {NoStop}%
\bibitem [{\citenamefont {Samajdar}\ \emph {et~al.}(2020)\citenamefont
  {Samajdar}, \citenamefont {Ho}, \citenamefont {Pichler}, \citenamefont
  {Lukin},\ and\ \citenamefont {Sachdev}}]{Samajdar2020}%
  \BibitemOpen
  \bibfield  {author} {\bibinfo {author} {\bibfnamefont {R.}~\bibnamefont
  {Samajdar}}, \bibinfo {author} {\bibfnamefont {W.~W.}\ \bibnamefont {Ho}},
  \bibinfo {author} {\bibfnamefont {H.}~\bibnamefont {Pichler}}, \bibinfo
  {author} {\bibfnamefont {M.~D.}\ \bibnamefont {Lukin}},\ and\ \bibinfo
  {author} {\bibfnamefont {S.}~\bibnamefont {Sachdev}},\ }\bibfield  {title}
  {\bibinfo {title} {Complex density wave orders and quantum phase transitions
  in a model of square-lattice {R}ydberg atom arrays},\ }\href
  {https://doi.org/10.1103/PhysRevLett.124.103601} {\bibfield  {journal}
  {\bibinfo  {journal} {Phys. Rev. Lett.}\ }\textbf {\bibinfo {volume} {124}},\
  \bibinfo {pages} {103601} (\bibinfo {year} {2020})}\BibitemShut {NoStop}%
\bibitem [{\citenamefont {Kalinowski}\ \emph {et~al.}(2022)\citenamefont
  {Kalinowski}, \citenamefont {Samajdar}, \citenamefont {Melko}, \citenamefont
  {Lukin}, \citenamefont {Sachdev},\ and\ \citenamefont
  {Choi}}]{Kalinowski2022}%
  \BibitemOpen
  \bibfield  {author} {\bibinfo {author} {\bibfnamefont {M.}~\bibnamefont
  {Kalinowski}}, \bibinfo {author} {\bibfnamefont {R.}~\bibnamefont
  {Samajdar}}, \bibinfo {author} {\bibfnamefont {R.~G.}\ \bibnamefont {Melko}},
  \bibinfo {author} {\bibfnamefont {M.~D.}\ \bibnamefont {Lukin}}, \bibinfo
  {author} {\bibfnamefont {S.}~\bibnamefont {Sachdev}},\ and\ \bibinfo {author}
  {\bibfnamefont {S.}~\bibnamefont {Choi}},\ }\bibfield  {title} {\bibinfo
  {title} {Bulk and boundary quantum phase transitions in a square {R}ydberg
  atom array},\ }\href {https://doi.org/10.1103/PhysRevB.105.174417} {\bibfield
   {journal} {\bibinfo  {journal} {Phys. Rev. B}\ }\textbf {\bibinfo {volume}
  {105}},\ \bibinfo {pages} {174417} (\bibinfo {year} {2022})}\BibitemShut
  {NoStop}%
\bibitem [{\citenamefont {Verresen}\ \emph {et~al.}(2021)\citenamefont
  {Verresen}, \citenamefont {Lukin},\ and\ \citenamefont
  {Vishwanath}}]{Verresen2021}%
  \BibitemOpen
  \bibfield  {author} {\bibinfo {author} {\bibfnamefont {R.}~\bibnamefont
  {Verresen}}, \bibinfo {author} {\bibfnamefont {M.~D.}\ \bibnamefont
  {Lukin}},\ and\ \bibinfo {author} {\bibfnamefont {A.}~\bibnamefont
  {Vishwanath}},\ }\bibfield  {title} {\bibinfo {title} {Prediction of toric
  code topological order from {R}ydberg blockade},\ }\href
  {https://doi.org/10.1103/PhysRevX.11.031005} {\bibfield  {journal} {\bibinfo
  {journal} {Phys. Rev. X}\ }\textbf {\bibinfo {volume} {11}},\ \bibinfo
  {pages} {031005} (\bibinfo {year} {2021})}\BibitemShut {NoStop}%
\bibitem [{\citenamefont {Semeghini}\ \emph {et~al.}(2021)\citenamefont
  {Semeghini}, \citenamefont {Levine}, \citenamefont {Keesling}, \citenamefont
  {Ebadi}, \citenamefont {Wang}, \citenamefont {Bluvstein}, \citenamefont
  {Verresen}, \citenamefont {Pichler}, \citenamefont {Kalinowski},
  \citenamefont {Samajdar}, \citenamefont {Omran}, \citenamefont {Sachdev},
  \citenamefont {Vishwanath}, \citenamefont {Greiner}, \citenamefont
  {Vuletić},\ and\ \citenamefont {Lukin}}]{Semeghini2021}%
  \BibitemOpen
  \bibfield  {author} {\bibinfo {author} {\bibfnamefont {G.}~\bibnamefont
  {Semeghini}}, \bibinfo {author} {\bibfnamefont {H.}~\bibnamefont {Levine}},
  \bibinfo {author} {\bibfnamefont {A.}~\bibnamefont {Keesling}}, \bibinfo
  {author} {\bibfnamefont {S.}~\bibnamefont {Ebadi}}, \bibinfo {author}
  {\bibfnamefont {T.~T.}\ \bibnamefont {Wang}}, \bibinfo {author}
  {\bibfnamefont {D.}~\bibnamefont {Bluvstein}}, \bibinfo {author}
  {\bibfnamefont {R.}~\bibnamefont {Verresen}}, \bibinfo {author}
  {\bibfnamefont {H.}~\bibnamefont {Pichler}}, \bibinfo {author} {\bibfnamefont
  {M.}~\bibnamefont {Kalinowski}}, \bibinfo {author} {\bibfnamefont
  {R.}~\bibnamefont {Samajdar}}, \bibinfo {author} {\bibfnamefont
  {A.}~\bibnamefont {Omran}}, \bibinfo {author} {\bibfnamefont
  {S.}~\bibnamefont {Sachdev}}, \bibinfo {author} {\bibfnamefont
  {A.}~\bibnamefont {Vishwanath}}, \bibinfo {author} {\bibfnamefont
  {M.}~\bibnamefont {Greiner}}, \bibinfo {author} {\bibfnamefont
  {V.}~\bibnamefont {Vuletić}},\ and\ \bibinfo {author} {\bibfnamefont
  {M.~D.}\ \bibnamefont {Lukin}},\ }\bibfield  {title} {\bibinfo {title}
  {Probing topological spin liquids on a programmable quantum simulator},\
  }\href {https://doi.org/10.1126/science.abi8794} {\bibfield  {journal}
  {\bibinfo  {journal} {Science}\ }\textbf {\bibinfo {volume} {374}},\ \bibinfo
  {pages} {1242} (\bibinfo {year} {2021})}\BibitemShut {NoStop}%
\bibitem [{\citenamefont {Samajdar}\ \emph {et~al.}(2021)\citenamefont
  {Samajdar}, \citenamefont {Ho}, \citenamefont {Pichler}, \citenamefont
  {Lukin},\ and\ \citenamefont {Sachdev}}]{Samajdar2021}%
  \BibitemOpen
  \bibfield  {author} {\bibinfo {author} {\bibfnamefont {R.}~\bibnamefont
  {Samajdar}}, \bibinfo {author} {\bibfnamefont {W.~W.}\ \bibnamefont {Ho}},
  \bibinfo {author} {\bibfnamefont {H.}~\bibnamefont {Pichler}}, \bibinfo
  {author} {\bibfnamefont {M.~D.}\ \bibnamefont {Lukin}},\ and\ \bibinfo
  {author} {\bibfnamefont {S.}~\bibnamefont {Sachdev}},\ }\bibfield  {title}
  {\bibinfo {title} {Quantum phases of {R}ydberg atoms on a kagome lattice},\
  }\href {https://doi.org/10.1073/pnas.2015785118} {\bibfield  {journal}
  {\bibinfo  {journal} {Proc. Natl. Acad. Sci. U.S.A.}\ }\textbf {\bibinfo
  {volume} {118}},\ \bibinfo {pages} {e2015785118} (\bibinfo {year}
  {2021})}\BibitemShut {NoStop}%
\bibitem [{\citenamefont {Cheng}\ \emph {et~al.}(2021)\citenamefont {Cheng},
  \citenamefont {Li},\ and\ \citenamefont {Zhai}}]{Cheng2021}%
  \BibitemOpen
  \bibfield  {author} {\bibinfo {author} {\bibfnamefont {Y.}~\bibnamefont
  {Cheng}}, \bibinfo {author} {\bibfnamefont {C.}~\bibnamefont {Li}},\ and\
  \bibinfo {author} {\bibfnamefont {H.}~\bibnamefont {Zhai}},\ }\bibfield
  {title} {\bibinfo {title} {Variational approach to quantum spin liquid in a
  {R}ydberg atom simulator},\ }\href@noop {} {\bibfield  {journal} {\bibinfo
  {journal} {arXiv:2112.13688}\ } (\bibinfo {year} {2021})}\BibitemShut
  {NoStop}%
\bibitem [{\citenamefont {Giudici}\ \emph {et~al.}(2022)\citenamefont
  {Giudici}, \citenamefont {Lukin},\ and\ \citenamefont
  {Pichler}}]{Giudici2022}%
  \BibitemOpen
  \bibfield  {author} {\bibinfo {author} {\bibfnamefont {G.}~\bibnamefont
  {Giudici}}, \bibinfo {author} {\bibfnamefont {M.~D.}\ \bibnamefont {Lukin}},\
  and\ \bibinfo {author} {\bibfnamefont {H.}~\bibnamefont {Pichler}},\
  }\bibfield  {title} {\bibinfo {title} {Dynamical preparation of quantum spin
  liquids in {R}ydberg atom arrays},\ }\href
  {https://doi.org/10.1103/PhysRevLett.129.090401} {\bibfield  {journal}
  {\bibinfo  {journal} {Phys. Rev. Lett.}\ }\textbf {\bibinfo {volume} {129}},\
  \bibinfo {pages} {090401} (\bibinfo {year} {2022})}\BibitemShut {NoStop}%
\bibitem [{\citenamefont {Fendley}\ \emph {et~al.}(2004)\citenamefont
  {Fendley}, \citenamefont {Sengupta},\ and\ \citenamefont
  {Sachdev}}]{Fendley2004}%
  \BibitemOpen
  \bibfield  {author} {\bibinfo {author} {\bibfnamefont {P.}~\bibnamefont
  {Fendley}}, \bibinfo {author} {\bibfnamefont {K.}~\bibnamefont {Sengupta}},\
  and\ \bibinfo {author} {\bibfnamefont {S.}~\bibnamefont {Sachdev}},\
  }\bibfield  {title} {\bibinfo {title} {Competing density-wave orders in a
  one-dimensional hard-boson model},\ }\href
  {https://doi.org/10.1103/PhysRevB.69.075106} {\bibfield  {journal} {\bibinfo
  {journal} {Phys. Rev. B}\ }\textbf {\bibinfo {volume} {69}},\ \bibinfo
  {pages} {075106} (\bibinfo {year} {2004})}\BibitemShut {NoStop}%
\bibitem [{\citenamefont {Samajdar}\ \emph {et~al.}(2018)\citenamefont
  {Samajdar}, \citenamefont {Choi}, \citenamefont {Pichler}, \citenamefont
  {Lukin},\ and\ \citenamefont {Sachdev}}]{Samajdar2018}%
  \BibitemOpen
  \bibfield  {author} {\bibinfo {author} {\bibfnamefont {R.}~\bibnamefont
  {Samajdar}}, \bibinfo {author} {\bibfnamefont {S.}~\bibnamefont {Choi}},
  \bibinfo {author} {\bibfnamefont {H.}~\bibnamefont {Pichler}}, \bibinfo
  {author} {\bibfnamefont {M.~D.}\ \bibnamefont {Lukin}},\ and\ \bibinfo
  {author} {\bibfnamefont {S.}~\bibnamefont {Sachdev}},\ }\bibfield  {title}
  {\bibinfo {title} {Numerical study of the chiral {${\mathbb{Z}}_{3}$} quantum
  phase transition in one spatial dimension},\ }\href
  {https://doi.org/10.1103/PhysRevA.98.023614} {\bibfield  {journal} {\bibinfo
  {journal} {Phys. Rev. A}\ }\textbf {\bibinfo {volume} {98}},\ \bibinfo
  {pages} {023614} (\bibinfo {year} {2018})}\BibitemShut {NoStop}%
\bibitem [{\citenamefont {Giudici}\ \emph {et~al.}(2019)\citenamefont
  {Giudici}, \citenamefont {Angelone}, \citenamefont {Magnifico}, \citenamefont
  {Zeng}, \citenamefont {Giudice}, \citenamefont {Mendes-Santos},\ and\
  \citenamefont {Dalmonte}}]{Giudici2019}%
  \BibitemOpen
  \bibfield  {author} {\bibinfo {author} {\bibfnamefont {G.}~\bibnamefont
  {Giudici}}, \bibinfo {author} {\bibfnamefont {A.}~\bibnamefont {Angelone}},
  \bibinfo {author} {\bibfnamefont {G.}~\bibnamefont {Magnifico}}, \bibinfo
  {author} {\bibfnamefont {Z.}~\bibnamefont {Zeng}}, \bibinfo {author}
  {\bibfnamefont {G.}~\bibnamefont {Giudice}}, \bibinfo {author} {\bibfnamefont
  {T.}~\bibnamefont {Mendes-Santos}},\ and\ \bibinfo {author} {\bibfnamefont
  {M.}~\bibnamefont {Dalmonte}},\ }\bibfield  {title} {\bibinfo {title}
  {Diagnosing {P}otts criticality and two-stage melting in one-dimensional
  hard-core boson models},\ }\href {https://doi.org/10.1103/PhysRevB.99.094434}
  {\bibfield  {journal} {\bibinfo  {journal} {Phys. Rev. B}\ }\textbf {\bibinfo
  {volume} {99}},\ \bibinfo {pages} {094434} (\bibinfo {year}
  {2019})}\BibitemShut {NoStop}%
\bibitem [{\citenamefont {Chepiga}\ and\ \citenamefont
  {Mila}(2019)}]{Chepiga2019}%
  \BibitemOpen
  \bibfield  {author} {\bibinfo {author} {\bibfnamefont {N.}~\bibnamefont
  {Chepiga}}\ and\ \bibinfo {author} {\bibfnamefont {F.}~\bibnamefont {Mila}},\
  }\bibfield  {title} {\bibinfo {title} {Floating phase versus chiral
  transition in a 1d hard-boson model},\ }\href
  {https://doi.org/10.1103/PhysRevLett.122.017205} {\bibfield  {journal}
  {\bibinfo  {journal} {Phys. Rev. Lett.}\ }\textbf {\bibinfo {volume} {122}},\
  \bibinfo {pages} {017205} (\bibinfo {year} {2019})}\BibitemShut {NoStop}%
\bibitem [{\citenamefont {Rader}\ and\ \citenamefont
  {L\"auchli}(2019)}]{Rader2019}%
  \BibitemOpen
  \bibfield  {author} {\bibinfo {author} {\bibfnamefont {M.}~\bibnamefont
  {Rader}}\ and\ \bibinfo {author} {\bibfnamefont {A.~M.}\ \bibnamefont
  {L\"auchli}},\ }\bibfield  {title} {\bibinfo {title} {Floating phases in
  one-dimensional {R}ydberg {I}sing chains},\ }\href@noop {} {\bibfield
  {journal} {\bibinfo  {journal} {arXiv:1908.02068}\ } (\bibinfo {year}
  {2019})}\BibitemShut {NoStop}%
\bibitem [{\citenamefont {Maceira}\ \emph {et~al.}(2022)\citenamefont
  {Maceira}, \citenamefont {Chepiga},\ and\ \citenamefont
  {Mila}}]{Maceira2022}%
  \BibitemOpen
  \bibfield  {author} {\bibinfo {author} {\bibfnamefont {I.~A.}\ \bibnamefont
  {Maceira}}, \bibinfo {author} {\bibfnamefont {N.}~\bibnamefont {Chepiga}},\
  and\ \bibinfo {author} {\bibfnamefont {F.}~\bibnamefont {Mila}},\ }\bibfield
  {title} {\bibinfo {title} {Conformal and chiral phase transitions in
  {R}ydberg chains},\ }\href@noop {} {\bibfield  {journal} {\bibinfo  {journal}
  {arXiv:2203.01163}\ } (\bibinfo {year} {2022})}\BibitemShut {NoStop}%
\bibitem [{\citenamefont {Turner}\ \emph {et~al.}(2018)\citenamefont {Turner},
  \citenamefont {Michailidis}, \citenamefont {Abanin}, \citenamefont {Serbyn},\
  and\ \citenamefont {Papi{\'{c}}}}]{Turner2018}%
  \BibitemOpen
  \bibfield  {author} {\bibinfo {author} {\bibfnamefont {C.~J.}\ \bibnamefont
  {Turner}}, \bibinfo {author} {\bibfnamefont {A.~A.}\ \bibnamefont
  {Michailidis}}, \bibinfo {author} {\bibfnamefont {D.~A.}\ \bibnamefont
  {Abanin}}, \bibinfo {author} {\bibfnamefont {M.}~\bibnamefont {Serbyn}},\
  and\ \bibinfo {author} {\bibfnamefont {Z.}~\bibnamefont {Papi{\'{c}}}},\
  }\bibfield  {title} {\bibinfo {title} {Weak ergodicity breaking from quantum
  many-body scars},\ }\href {https://doi.org/10.1038/s41567-018-0137-5}
  {\bibfield  {journal} {\bibinfo  {journal} {Nat. Phys.}\ }\textbf {\bibinfo
  {volume} {14}},\ \bibinfo {pages} {745} (\bibinfo {year} {2018})}\BibitemShut
  {NoStop}%
\bibitem [{\citenamefont {Serbyn}\ \emph {et~al.}(2021)\citenamefont {Serbyn},
  \citenamefont {Abanin},\ and\ \citenamefont {Papi{\'{c}}}}]{Serbyn2021}%
  \BibitemOpen
  \bibfield  {author} {\bibinfo {author} {\bibfnamefont {M.}~\bibnamefont
  {Serbyn}}, \bibinfo {author} {\bibfnamefont {D.~A.}\ \bibnamefont {Abanin}},\
  and\ \bibinfo {author} {\bibfnamefont {Z.}~\bibnamefont {Papi{\'{c}}}},\
  }\bibfield  {title} {\bibinfo {title} {Quantum many-body scars and weak
  breaking of ergodicity},\ }\href {https://doi.org/10.1038/s41567-021-01230-2}
  {\bibfield  {journal} {\bibinfo  {journal} {Nat. Phys.}\ }\textbf {\bibinfo
  {volume} {17}},\ \bibinfo {pages} {675} (\bibinfo {year} {2021})}\BibitemShut
  {NoStop}%
\bibitem [{\citenamefont {Alcaraz}\ and\ \citenamefont
  {Pimenta}(2020{\natexlab{a}})}]{Alcaraz2020a}%
  \BibitemOpen
  \bibfield  {author} {\bibinfo {author} {\bibfnamefont {F.~C.}\ \bibnamefont
  {Alcaraz}}\ and\ \bibinfo {author} {\bibfnamefont {R.~A.}\ \bibnamefont
  {Pimenta}},\ }\bibfield  {title} {\bibinfo {title} {Free fermionic and
  parafermionic quantum spin chains with multispin interactions},\ }\href
  {https://doi.org/10.1103/PhysRevB.102.121101} {\bibfield  {journal} {\bibinfo
   {journal} {Phys. Rev. B}\ }\textbf {\bibinfo {volume} {102}},\ \bibinfo
  {pages} {121101(R)} (\bibinfo {year} {2020}{\natexlab{a}})}\BibitemShut
  {NoStop}%
\bibitem [{\citenamefont {Alcaraz}\ and\ \citenamefont
  {Pimenta}(2020{\natexlab{b}})}]{Alcaraz2020b}%
  \BibitemOpen
  \bibfield  {author} {\bibinfo {author} {\bibfnamefont {F.~C.}\ \bibnamefont
  {Alcaraz}}\ and\ \bibinfo {author} {\bibfnamefont {R.~A.}\ \bibnamefont
  {Pimenta}},\ }\bibfield  {title} {\bibinfo {title} {Integrable quantum spin
  chains with free fermionic and parafermionic spectrum},\ }\href
  {https://doi.org/10.1103/PhysRevB.102.235170} {\bibfield  {journal} {\bibinfo
   {journal} {Phys. Rev. B}\ }\textbf {\bibinfo {volume} {102}},\ \bibinfo
  {pages} {235170} (\bibinfo {year} {2020}{\natexlab{b}})}\BibitemShut
  {NoStop}%
\bibitem [{\citenamefont {Fendley}(2019)}]{Fendley2019}%
  \BibitemOpen
  \bibfield  {author} {\bibinfo {author} {\bibfnamefont {P.}~\bibnamefont
  {Fendley}},\ }\bibfield  {title} {\bibinfo {title} {Free fermions in
  disguise},\ }\href {https://doi.org/10.1088/1751-8121/ab305d} {\bibfield
  {journal} {\bibinfo  {journal} {J. Phys. A: Math. Theor.}\ }\textbf {\bibinfo
  {volume} {52}},\ \bibinfo {pages} {335002} (\bibinfo {year}
  {2019})}\BibitemShut {NoStop}%
\end{thebibliography}%
\end{document}